\documentclass[11pt, a4paper, twoside]{article}

\usepackage[utf8]{inputenc}
\usepackage[british]{babel}
\usepackage{appendix}
\usepackage{geometry}
\usepackage{etoolbox}
\usepackage{caption}
\usepackage{makecell}
\usepackage{booktabs}
\usepackage{subcaption}
\usepackage{tabularx}

\usepackage{amsmath}
\usepackage{amsfonts}
\usepackage{amsmath}
\usepackage{amssymb}
\usepackage{mathtools}
\usepackage{amsthm} 
\usepackage{thmtools}
\usepackage{graphicx}
\usepackage{pdfpages}

\usepackage{paralist}
\usepackage{enumitem}

\usepackage[hidelinks]{hyperref}
\usepackage[noabbrev, nameinlink, capitalise]{cleveref}
\hypersetup{ colorlinks=false, linkcolor=black, citecolor=black, filecolor=black, urlcolor=black }

\usepackage{titlesec}
\usepackage{fancyhdr}

\geometry{includehead,includefoot,
	left=0.9in,right=0.9in,
	top=0.6in,bottom=0.8in,
	headheight=20pt,headsep=0.25in,
	footskip = 0.3in
}

\usepackage{natbib}
\setcitestyle{authoryear,open={(},close={)}}

\bibliographystyle{abbrvnat}

\declaretheorem[style=definition, numberwithin=section]{remark}
\declaretheorem[style=definition, numberwithin=section]{definition}

\newtheorem{proposition}{Proposition}[section]

\declaretheorem[style=definition ,numberwithin=section]{example}
\newtheorem{theorem}{Theorem}[section]

\newcommand{\R}{\mathbb{R}}

\newcommand{\Om}{\Omega}
\newcommand{\Sig}{\mathcal{A}}
\newcommand{\Prob}{\mathbb{P}}
\newcommand{\E}{\mathbb{E}}

\usepackage{scalerel}[2016/12/29]

\titleformat{\section}{\normalfont\scshape\large\centering}{\thesection}{1em}{}

\usepackage{fancyhdr}
\fancyhf{}
\fancyhead[LE]{\scshape{\nouppercase{\leftmark}}}
\fancyhead[RO]{\rightmark}
\fancyfoot[RO,LE]{\rm\thepage}

\pagestyle{fancy}
\fancypagestyle{plain}{%
	\fancyhf{}
	\fancyfoot[RO,LE]{\rm\thepage}

}

\setlist[itemize]{noitemsep}
\setlist[enumerate]{noitemsep}

\title{ESG Risk: Lessons Learned from Utility Theory}
\date{\today}
\author{%
    Sebastian Geissel%
    \thanks{%
        Trier University of Applied Sciences,
        Business School,
        Schneidershof,
        D--54293 Trier, Germany.
        E-Mail: \url{s.geissel@hochschule-trier.de}
    }
    \and
    Christoph Knochenhauer%
    \thanks{%
        Technical University of Munich,
        School of Computation, Information and Technology,
        Parkring 11,
        D--85748 Munich, Germany.
        E-Mail: \url{knochenhauer@tum.de}
}
}

\newcommand{\ESGSpace}{\mathbb{S}}
\newcommand{\RiskSF}{\hat{\rho}}
\newcommand{\RiskESG}{\rho}
\newcommand{\loss}{\ell}
\newcommand{\util}{u}
\newcommand{\positions}{\mathcal{Q}}

\newcommand{\xlow}{\underline{x}}
\newcommand{\slow}{\underline{s}}
\newcommand{\indifference}{\mathcal{I}}
\newcommand{\premium}{\mathfrak{p}}
\newcommand{\defined}{\coloneqq}

\newcommand{\argdot}{\,\cdot\,}

\newcommand{\bigSetSep}{\;\big|\;}
\newcommand{\BigSetSep}{\;\Big|\;}

\newcommand{\effDom}{\mathrm{dom}}
\newcommand{\raw}{\mathrm{raw}}
\newcommand{\norma}{\mathrm{norm}}

\newcommand{\close}{\hspace*{\fill}\emph{$\boldsymbol\diamond$}}
\newcommand{\closeEqn}{\tag*{\emph{$\boldsymbol\diamond$}}}

\newtheorem*{standingassumption}{Standing Assumption}

\begin{document}

\maketitle

\section*{Abstract} \label{sec:abstract}
We propose a new class of monetary risk measures for assessing financial and ESG risk.
The construction is based on classical shortfall risk measures with loss function replaced by a multi-attribute utility function.
We present an extensive theoretical analysis of these risk measures, showing specifically how properties of the utility function translate into properties of the associated risk measure.
We furthermore discuss how these multi-attribute risk measures can be used to compute minimum risk portfolios and show in a numerical study that accounting for ESG risk in optimal portfolio choice has a significant influence on the composition of portfolios.

\section{Introduction} \label{sec:introduction}

At the core of the base operations in the financial industry lies the management and mitigation of risks.
The growing importance of environmental, social, and governance (ESG) considerations therefore raises the question of how ESG-related risks can be incorporated into established risk management frameworks, including from a regulatory perspective; see \cite{ecb2020, boe2023, fed2023, eiopa2024, hkma2024, eba2025}.

Beyond anecdotal evidence from industry practice,%
\footnote{In his 2022 \emph{Letter To CEOs}, the BlackRock CEO Larry Fink calls sustainable investing ``a tectonic shift of capital'', noting that ``sustainable investments have now reached \$4 trillion'', referring to data sources from Morningstar, Simfund, Broadridge; see \url{https://www.blackrock.com/corporate/investor-relations/2022-larry-fink-ceo-letter}.}
\footnote{Since the launch of the \emph{United Nations Principles for Responsible Investment} (PRI) in 2005, the number of signatories has grown from 50 in 2005 to more than 5,000 in 2024, ``representing over half the world’s institutional assets under management''; see \url{https://www.unpri.org/pri-blog/meeting-investors-where-theyre-at-pri-launches-new-strategy/12636.article}.}
\footnote{As of 2025--01--01, the global sustainable fund universe accounted for \$3.2 trillion in assets, representing an 8\% increase over the previous year and a tripling over the last five years; see \url{https://www.morningstar.com/business/insights/research/global-esg-flows}.}
\emph{ESG-Finance} has become a vibrant area of academic research.
One important observation is that explicitly accounting for ESG reveals a more nuanced picture on financial decision making.
While empirical studies document that sustainability aspects have an effect on investment decisions (\cite{hartzmark2019investors}), the implications for portfolio performance remain contested.
Parts of the literature argue that ESG has a detrimental effect on financial portfolios (e.g., \cite{hong2009price, bolton2021investors, cornell2021esg, hsu2023pollution}), whereas others find neutral (\cite{naffa2022, pastor2022dissecting}) or even positive effects (\cite{kotsantonis2016esg, lins2017social}).
An explanation of these diverging findings is that, as \cite{pedersen2021responsible} put it, ``... ESG is not fully priced in the market''.
Similarly, a key implication of the findings of \cite{avramov2022} is that purely financial measures may fail to capture an investment’s full risk profile.

This discussion allows for but one conclusion: risk assessment should explicitly account for ESG considerations.
This is the central theme of the present paper.
We propose a framework for measuring combined financial and ESG risk, inspired by the axiomatic approach of \cite{artzner1999},
and extending the classical theory of monetary risk measures developed by \cite{follmer2002}.
In the classical setting, a large class of convex monetary risk measures can be expressed as
\[
  \RiskSF[X] \defined \inf\bigl\{m\in\R:
    \E\bigl[\loss(-X-m)\bigr] \leq x_0\bigr\},
\]
where $X$ is the net financial value of the position to be evaluated, $\ell:\R\to\R$ an increasing and convex loss function, and $x_0\in\R$ a fixed reference level.
Our observation is that any such risk measure can equivalently be expressed as
\[
  \RiskSF[X]\defined\inf\bigl\{m\in\R:\E\bigl[\hat\util(X+m)\bigr]\geq 0\bigr\},
\]
for a concave increasing utility function $\hat\util(x) \defined -\ell(-x)+x_0$.
This allows us to interpret $\RiskSF[X]$ as the minimum amount of cash $m\in\R$ such that $X+m$ has positive expected utility.
We generalize this framework by allowing investors to account for an additional non-financial risk factor $S$, accounting e.g.\ for ESG factors, through a multi-attribute utility function
$(x,s)\mapsto\util(x,s)$. This leads to an ESG risk measure of the form
\[
  \RiskESG[X,S]
  \defined \inf\bigl\{m\in\R:\E\bigl[\util(X+m,S)\bigr]\geq 0\bigr\}.
\]
An advantage of our approach is that it preserves the axiomatic foundations of utility theory,
allowing assumptions on the interaction between financial and ESG risk to be directly encoded in the structure of the utility function $\util$ and, consequently, in the risk measure $\RiskESG[X,S]$.
This contrasts with the existing literature, where ESG risk is often incorporated through ad hoc extensions of classical risk measures.

Related approaches include using ESG score dispersion to explain financial volatility (\cite{capelli2021forecasting}),
extending value at risk with ESG information (\cite{capelli2023}),
or applying coherent risk measures to convex combinations of financial returns and ESG scores (\cite{torri2023esg}).
Alternative frameworks include multi-risk-aversion utility models (\cite{escobar2022}),
vector- or set-valued risk measures (\cite{jouini2004vector, hamel2010duality}). A systematic review of the literature on the conceptualization of ESG risks in portfolio studies is provided by \cite{gallucci2022}. 

Finally, from a utility-based perspective, \cite{bollen2007}, \cite{jessen2012}, \cite{dorfleitner2017} are closely related to our approach in that they employ additive multi-attribute utility functions of the form $\util(x,s) = \util_1(x) + \util_2(s)$ in optimal investment problems. For risk measurement, however, this additive specification implies that the joint distribution of financial and ESG components is effectively ignored. In contrast, we assume only mutual utility independence, leading to the specification $\util(x,s) = \util_1(x) + \util_2(s) + k\util_1(x)\util_2(s)$ for some $k\in\R$, which explicitly accounts for dependence between $X$ and $S$ when measuring risk.

The main contributions of this paper are threefold:
\begin{enumerate}
\item We develop a rigorous mathematical framework for measuring risk based on multiple attributes and apply this framework to construct a class of risk measures combining financial and ESG risk.
\item Financial and ESG exposure are modeled as separate, possibly dependent, random variables and combined via a multi-attribute utility function.
This allows for a clear separation between the modeling of risk exposures and investor preferences.
\item Using ESG ratings and stock returns for all S\&P 500 firms over an extended sample period, we provide empirical evidence that the proposed risk measures are capable of painting a fine-grained picture of the inherent risk profile of assets with similar financial performance but differing ESG exposures.
At the portfolio level, minimum-risk strategies based on ESG risk measures achieve substantially improved ESG profiles without sacrificing returns.
\end{enumerate}

Section~\ref{sec:risk_measures_for_ESG_assets} develops the utility-based risk measures and establishes their key properties, such as translation invariance, monotonicity, and convexity, along with several examples of ESG risk measures.
It also introduces ESG risk premia and characterizes favorable and unfavorable ESG exposures.
Section~\ref{sec:applications_of_esg_risk_measures} applies the framework to a dataset comprising all S\&P 500 companies, examining single-asset risk measurement and minimum-risk portfolio construction.

\section{Risk Measures for ESG Assets}\label{sec:risk_measures_for_ESG_assets}

We propose a class of risk measures that accounts for both financial and ESG risk of an investment position.
Financial exposure is represented by a random variable $X:\Om\to\R$ on a probability space $(\Om,\Sig,\Prob)$, where $X(\omega)$ denotes the net value (in monetary units) in scenario $\omega\in\Om$.
ESG exposure is modeled by a random variable $S:\Om\to\ESGSpace$, with $\ESGSpace\subseteq\R$, where $S(\omega)$ is the future ESG rating in scenario $\omega$.
Typically, $\ESGSpace$ is either a finite set, e.g.\ $\{1,\dots,N\}$ for rating classes, or a bounded interval such as $[0,1]$ or $[0,100]$ for score-based ratings.

\subsection{Utility-Based Shortfall Risk Measures}

To motivate our definition of ESG risk measures, we recall the shortfall risk measures introduced by \cite{follmer2002}.
Given an increasing and convex loss function $\loss:\R\to\R$ and a threshold $x_0\in\R$, the shortfall risk of a financial position $X$ is defined as
\[
  \RiskSF[X] \defined \inf\bigl\{m\in\R: \E\bigl[\loss(-X-m)\bigr] \leq x_0\bigr\}.
\]
A position is acceptable if $\E[\loss(-X)]\leq x_0$; otherwise, cash $m$ is added to make $X+m$ acceptable.
Hence, $\RiskSF[X]$ represents the minimal capital requirement ensuring acceptability.

Defining $\hat\util\defined-\loss(-\argdot)+x_0$, the shortfall risk measure can be written as
\[
  \RiskSF[X]\defined\inf\bigl\{m\in\R:\E\bigl[\hat\util(X+m)\bigr]\geq 0\bigr\}.
\]
The function $\hat\util:\R\to\R$ is increasing and concave, and thus a utility function.
Acceptable positions therefore correspond to non-negative expected utility.
This suggests a natural generalization to multiple risk dimensions by replacing $\hat\util$ with a multi-attribute utility function $\util$, leading to the following definition of ESG shortfall risk measures.

\begin{definition}\label{def:esg_risk_measures}
For a measurable function $\util:\R\times\ESGSpace\to\R\cup\{-\infty\}$, define%
\footnote{Here, $\util_+\defined \max\{\util,0\}$ denotes the positive part of $\util$. The integrability condition ensures that $\E[\util(X+m,S)]$ exists, but may equal $-\infty$.}
the set of \emph{quantifiable positions} by
\[
  \positions \defined \Bigl\{(X,S):\Om\to\R\times\ESGSpace\text{ measurable} \BigSetSep \E\bigl[\util_+(X+m,S)\bigr] < \infty \text{ for all }m\in\R\Bigr\}.
\]
The \emph{shortfall risk measure for ESG assets} is then given by
\[
  \RiskESG[X,S] \defined \inf\bigl\{m\in\R:\E\bigl[\util(X+m,S)\bigr]\geq 0\bigr\},
  \qquad (X,S)\in\positions.\closeEqn
\]
\end{definition}

As in the single-attribute case, a position $(X,S)$ is acceptable if $\E[\util(X,S)]\geq 0$.
Otherwise, adding cash to $X$ renders it acceptable, and $\RiskESG[X,S]$ is the minimal required amount.

Just as for classical shortfall risk measures, $\RiskESG[X,S]$ is translation invariant in $X$, allowing us to interpret $X\mapsto\RiskESG[X,S]$ as a monetary risk measure.

\begin{theorem}[Translation Invariance]\label{theorem:translation_invariance}
Any ESG risk measure $\RiskESG$ satisfies
\[
  \RiskESG[X + \eta,S] = \RiskESG[X,S] - \eta\qquad\text{for all }\eta\in\R
\]
and any $(X,S)\in\positions$.
In particular, if $\RiskESG[X,S]\in\R$, then $\RiskESG[X + \RiskESG[X,S],S] = 0$.\close
\end{theorem}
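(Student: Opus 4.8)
The plan is to establish translation invariance by a direct change of variables in the defining infimum, and then to read off the fixed-point identity as an immediate corollary. The argument is purely structural; no properties of $\util$ beyond measurability are needed.

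First I would verify that $(X + \eta, S) \in \positions$ whenever $(X, S) \in \positions$, so that the left-hand side $\RiskESG[X + \eta, S]$ is even well-defined. This is immediate from the definition of $\positions$: writing $(X + \eta) + m = X + (m + \eta)$ and recalling that the integrability condition $\E[\util_+(X + m, S)] < \infty$ is imposed for \emph{all} $m \in \R$, a constant shift of the financial component merely reindexes the same family of conditions and hence preserves membership in $\positions$.

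For the main identity I would start from
\[
  \RiskESG[X + \eta, S]
  = \inf\bigl\{m \in \R : \E[\util(X + \eta + m, S)] \geq 0\bigr\}
\]
and substitute $m' \defined m + \eta$. Because $m \mapsto m + \eta$ is a bijection of $\R$, the constraint set expressed in $m'$ is exactly $\{m' \in \R : \E[\util(X + m', S)] \geq 0\}$, while each admissible value $m$ equals $m' - \eta$. Pulling the constant $-\eta$ out of the infimum then yields $\RiskESG[X + \eta, S] = \RiskESG[X, S] - \eta$. This manipulation is valid uniformly in the extended reals, so the cases $\RiskESG[X, S] = \pm\infty$ require no separate treatment.

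The fixed-point identity is then obtained by specializing $\eta \defined \RiskESG[X, S]$, which is a genuine real number under the stated hypothesis: translation invariance gives $\RiskESG[X + \RiskESG[X, S], S] = \RiskESG[X, S] - \RiskESG[X, S] = 0$. I do not anticipate any substantive obstacle here---the only points demanding minor care are the membership check for $\positions$ and the validity of extracting a constant from an infimum, both of which are routine.
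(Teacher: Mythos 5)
Your proposal is correct and follows essentially the same route as the paper's proof: a change of variables $\tilde{m} = m + \eta$ in the defining infimum, followed by specializing $\eta = \RiskESG[X,S]$ for the fixed-point identity. The only additions---the explicit membership check $(X+\eta,S)\in\positions$ and the remark about the extended-real cases---are sound and merely make explicit what the paper leaves implicit.
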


\begin{proof}
By definition,
\begin{align*}
  \RiskESG[X + \eta,S]
  &= \inf\bigl\{m\in\R : \E[\util(X+\eta+m,S)]\geq 0\bigr\}\\
  &= \inf\bigl\{\tilde m\in\R : \E[\util(X+\tilde m,S)]\geq 0\bigr\}-\eta
   = \RiskESG[X,S]-\eta,
\end{align*}
which proves the first claim. The second follows immediately by substituting $\eta=\RiskESG[X,S]$.
\end{proof}

Observe that the utility function $\util$ in the definition of $\RiskESG$ may take the value $-\infty$, which allows to declare positions with ESG ratings below a threshold $\bar s\in\ESGSpace$ unacceptable.
This is the case whenever
\[
  \util(x,s) = -\infty
  \qquad\text{for all }(x,s)\in\R\times\ESGSpace\text{ with } s<\bar s.
\]
This leads to the following example.

\begin{example}[Penalty and Threshold ESG Risk Measures]
Let $\util_1:\R\to\R\cup\{-\infty\}$ be measurable and denote by $\RiskSF$ the associated financial shortfall risk measure
\[
  \RiskSF[X]\defined\inf\bigl\{m\in\R:\E[\util_1(X+m)]\geq 0\bigr\}.
\]
For $\bar s\in\ESGSpace$ and $\eta\in[0,+\infty]$, define $\util_2:\ESGSpace\to\R\cup\{-\infty\}$ by
\[
  \util_2(s)\defined
  \begin{cases}
    0, & s\geq \bar s,\\
    -\eta, & s<\bar s.
  \end{cases}
\]
Assume that the multi-attribute utility is additive, i.e.\ $\util(x,s)=\util_1(x)+\util_2(s)$.
The resulting ESG risk measure then satisfies
\[
  \RiskESG[X,S]=\RiskSF[X]+\eta\Prob[S<\bar s],
  \qquad (X,S)\in\positions.
\]
Thus, $\RiskESG$ coincides with $\RiskSF$ for positions with $S\geq \bar s$, while positions facing a drop below $\bar s$ incur an additional penalty $\eta\Prob[S<\bar s]$. Choosing $\eta=+\infty$ renders such positions unacceptable.
We therefore refer to $\RiskESG$ as a \emph{penalty ESG risk measure} for $\eta<+\infty$ and as a \emph{threshold ESG risk measure} for $\eta=+\infty$.\close
\end{example}

An advantage of the utility-based approach to shortfall risk measures is that assumptions on the interaction of financial and ESG risk can be translated into structural properties of the utility function $\util$ via multi-attribute utility theory.
Throughout, we impose the following assumption.

\begin{standingassumption}
The attributes ``financial risk'' and ``ESG risk'' are mutually utility
independent.\close
\end{standingassumption}

To recall the notion of utility independence, let $X_1,X_2$ be two financial positions.
Financial risk is utility independent of ESG risk if the preference for $X_2$ over $X_1$ does not depend on the fixed ESG level, that is,
\[
  \E[u(X_1,\bar s)] \le \E[u(X_2,\bar s)]
  \quad\text{for some }\bar s\in\ESGSpace
\]
if and only if
\[
  \E[u(X_1,s)] \le \E[u(X_2,s)]
  \quad\text{for all } s\in\ESGSpace,
\]
whenever the corresponding positions are quantifiable.
Mutual utility independence means that both attributes are utility independent of each other.
A classical result in multi-attribute utility theory (see \cite{keeney1993decisions}) states that this holds if and only if there exist $k\in\R$ and single-attribute utility functions
\[
  \util_1:\R\to\R\cup\{-\infty,+\infty\},\qquad
  \util_2:\ESGSpace\to\R\cup\{-\infty,+\infty\},
\]
such that%
\footnote{We adopt the convention $\infty-\infty=-\infty$.}
\[
  \util(x,s)=\util_1(x)+\util_2(s)+k\util_1(x)\util_2(s),
  \qquad (x,s)\in\R\times\ESGSpace.
\]
Hence, we subsequently assume that $\util$ admits this representation.

\begin{remark}\label{rem:additive}
For $k\neq 0$, the representation of $\util$ is equivalent to
\[
  1 + k\util(x,s) = \bigl(1 + k\util_1(x)\bigr)\bigl(1 + k\util_2(s)\bigr),
  \qquad (x,s)\in\R\times\ESGSpace,
\]
that is, a linear transformation of $\util$ is multiplicative in the corresponding transformations of $\util_1$ and $\util_2$.
If $k=0$, the representation is additive,
\[
  \util(x,s)=\util_1(x)+\util_2(s).
\]
This additive form is commonly assumed in the literature; see \cite{bollen2007, jessen2012, dorfleitner2017}.
As shown in \cite{keeney1993decisions}, it is equivalent to \emph{additive independence} of financial and ESG risk, meaning that preferences depend only on the marginal distributions of $X$ and $S$ and not on their joint distribution.
Consequently, the risk of $(X,S)$ is evaluated as if $X$ and $S$ were independent.
While additive utilities are useful, restricting to this class alone is a strong assumption.\close
\end{remark}

\subsection{Monotonicity of ESG Risk Measures}

A fundamental property of risk measures is monotonicity: higher financial payoffs or improved ESG ratings should reduce risk.
The utility-based framework allows to translate this requirement into simple conditions on $\util$.

\begin{theorem}[Monotonicity]\label{theorem:monotonicity_risk}
Assume that $\util$ is non-decreasing in both arguments and let $(X,S),(X',S')\in\positions$ satisfy $X\leq X'$ and $S\leq S'$.
Then
\[
  \RiskESG[X,S] \geq \RiskESG[X',S'].\closeEqn
\]
\end{theorem}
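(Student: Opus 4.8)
The plan is to exploit the monotonicity of $\util$ in both arguments to show that the acceptance set for $(X,S)$ is contained in that for $(X',S')$, from which the inequality for the infima follows directly. Concretely, fix any $m\in\R$ that is admissible for $(X,S)$, meaning $\E[\util(X+m,S)]\geq 0$. Since $X\leq X'$ and $S\leq S'$ pointwise on $\Om$, and $\util$ is non-decreasing in each argument, we have the pointwise inequality $\util(X+m,S)\leq\util(X'+m,S')$. Taking expectations preserves this inequality, so
\[
  \E\bigl[\util(X'+m,S')\bigr]\geq\E\bigl[\util(X+m,S)\bigr]\geq 0,
\]
which shows that the same $m$ is admissible for $(X',S')$.

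Having established this, the final step is to compare the two infima. The set
\[
  \bigl\{m\in\R:\E[\util(X+m,S)]\geq 0\bigr\}
\]
is contained in
\[
  \bigl\{m\in\R:\E[\util(X'+m,S')]\geq 0\bigr\},
\]
and since taking the infimum is antitone with respect to set inclusion, the infimum over the larger set is no larger than the infimum over the smaller set. This yields exactly $\RiskESG[X',S']\leq\RiskESG[X,S]$, which is the claim.

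I do not anticipate a genuine obstacle here, as the argument is essentially a monotonicity-and-inclusion chain. The one point that warrants care is the handling of the extended-real-valued integrand: because $\util$ may take the value $-\infty$, the expectation $\E[\util(X+m,S)]$ is interpreted in the sense permitted by the integrability condition built into $\positions$ (namely $\E[\util_+(X+m,S)]<\infty$), so the expectation always exists in $[-\infty,\infty)$. I would verify that the pointwise inequality $\util(X+m,S)\leq\util(X'+m,S')$ remains valid and is respected under expectation even when the left-hand side equals $-\infty$ on some set, which is immediate since monotonicity of the expectation holds for extended-real-valued functions bounded above in the integrable sense. A minor edge case is when the admissible set for $(X,S)$ is empty, in which case $\RiskESG[X,S]=+\infty$ and the inequality holds trivially; this is automatically covered by the infimum-over-inclusion argument with the convention $\inf\emptyset=+\infty$.
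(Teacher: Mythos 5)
Your proof is correct and takes essentially the same route as the paper's: monotonicity of $\util$ yields the pointwise, hence expectation-level, comparison $\E[\util(X+m,S)]\leq\E[\util(X'+m,S')]$ for every $m$, which gives inclusion of the acceptance sets and the reverse inequality of the infima. Your added care with $-\infty$-valued integrands and the convention $\inf\emptyset=+\infty$ merely makes explicit what the paper leaves implicit (and, incidentally, the inequality displayed in the paper's proof has its direction typographically reversed --- yours is the correct one).
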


\begin{proof}
Since $\util$ is non-decreasing,
\[
  \E[\util(X+m,S)] \geq \E[\util(X'+m,S')] \quad\text{for all } m\in\R,
\]
which directly implies the claim.
\end{proof}

It is natural to construct the multi-attribute utility function $\util$ from the single-attribute utilities $\util_1$ and $\util_2$.
Requiring $\util$ to be monotone therefore imposes conditions on $\util_1$ and $\util_2$, which are themselves assumed to be monotone as they model preferences toward financial and ESG exposures.
The following result characterizes this relationship.

\begin{proposition}[Monotonicity of $\util$]\label{prop:monotone_utility}
Suppose that $\util_1$ is non-decreasing. Then, for any $s\in\ESGSpace$,
\[
  x \mapsto \util(x,s)\text{ is non-decreasing on $\R$}
  \qquad\text{if and only if}\qquad
  1 + k\util_2(s)\in\{-\infty\}\cup[0,\infty].
\]
Analogously, if $\util_2$ is non-decreasing and $x\in\R$ is fixed,
\[
  s \mapsto \util(x,s)\text{ is non-decreasing on $\ESGSpace$}
  \qquad\text{if and only if}\qquad
  1 + k\util_1(x)\in\{-\infty\}\cup[0,\infty].\closeEqn
\]
\end{proposition}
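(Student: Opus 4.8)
The plan is to fix $s\in\ESGSpace$ and exploit the fact that, once we collect the terms containing $\util_1(x)$, the multi-attribute utility factorises as
\[
  \util(x,s) = \bigl(1 + k\util_2(s)\bigr)\util_1(x) + \util_2(s),
\]
so that, for fixed $s$, the map $x\mapsto\util(x,s)$ is just the non-decreasing function $\util_1$ rescaled by the scalar $b(s)\defined 1 + k\util_2(s)$ and shifted by the constant $\util_2(s)$. The whole statement then reduces to reading off how the sign of $b(s)$ governs monotonicity. First I would dispose of the case where $\util_2(s)\in\R$ is finite, so that $b(s)\in\R$ as well. Here the argument is elementary: if $b(s)\geq 0$ then $b(s)\util_1$ is non-decreasing and adding the constant $\util_2(s)$ preserves this, giving the ``if'' direction; conversely, picking $x<x'$ with $\util_1(x)<\util_1(x')$—which is possible since $\util_1$ is a non-constant utility—yields $\util(x,s)-\util(x',s) = b(s)\bigl(\util_1(x)-\util_1(x')\bigr)$, which is strictly positive whenever $b(s)<0$, contradicting monotonicity. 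This already settles necessity in full, because the only values excluded by the claim, namely $b(s)\in(-\infty,0)$, are finite.

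It remains to prove sufficiency when $b(s)$ takes an infinite value, i.e.\ when $\util_2(s)\in\{-\infty,+\infty\}$ (the case $k=0$ being immediate, since then $b(s)=1$ and $\util(\cdot,s)=\util_1+\util_2(s)$ is non-decreasing, constant if $\util_2(s)=\pm\infty$). The genuinely infinite cases are where I expect the main obstacle to lie, since they force us to evaluate the defining sum $\util_1(x)+\util_2(s)+k\util_1(x)\util_2(s)$ under the convention $\infty-\infty=-\infty$. The key observation is that whenever $\util_2(s)=-\infty$ the additive $-\infty$ term dominates: for every $x$ the cross term $k\util_1(x)\util_2(s)$ is at worst $+\infty$, and the convention then collapses $\util(x,s)$ to the constant $-\infty$, which is trivially non-decreasing; correspondingly $b(s)=1+k\util_2(s)$ equals $-\infty$, $+\infty$, or $1$ according to the sign of $k$, all of which lie in $\{-\infty\}\cup[0,\infty]$. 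The remaining situation $\util_2(s)=+\infty$ is handled by tracking the sign of $\util_1$: for instance, when additionally $k>0$ one finds $\util(x,s)=-\infty$ on $\{\util_1<0\}$ and $\util(x,s)=+\infty$ on $\{\util_1>0\}$, which is non-decreasing because $\util_1$ is, and here $b(s)=+\infty$ as predicted.

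The hard part is thus the careful extended-real bookkeeping indicated above: one must track, as $x$ varies, the sign of $\util_1(x)$—equivalently the sign of the factor $1+k\util_1(x)$ appearing in the multiplicative form $1+k\util(x,s)=\bigl(1+k\util_1(x)\bigr)\bigl(1+k\util_2(s)\bigr)$ of Remark~\ref{rem:additive}—and verify that the convention $\infty-\infty=-\infty$ is applied consistently, so that the resulting (at most two-valued) function is non-decreasing exactly when $b(s)$ lands in the stated set. I would organise this as a short case distinction on the sign of $k$ together with the sign of $\util_1$, each case reducing to a one-line evaluation. Finally, the second equivalence—monotonicity of $s\mapsto\util(x,s)$ when $\util_2$ is non-decreasing—requires no new work: the representation is symmetric under interchanging the roles of $(\util_1,x)$ and $(\util_2,s)$, since $\util(x,s)=\bigl(1+k\util_1(x)\bigr)\util_2(s)+\util_1(x)$, so the identical argument with $1+k\util_1(x)$ in place of $b(s)$ delivers the claim.
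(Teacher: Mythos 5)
Your proof rests on exactly the same observation as the paper's: the factorisation $\util(x,s) = \bigl(1+k\util_2(s)\bigr)\util_1(x) + \util_2(s)$ together with its symmetric counterpart $\util(x,s) = \bigl(1+k\util_1(x)\bigr)\util_2(s) + \util_1(x)$, from which the paper simply declares the result immediate. Your extra bookkeeping---the finite/infinite case split for $1+k\util_2(s)$ under the convention $\infty-\infty=-\infty$, and the (tacitly needed) non-constancy of $\util_1$ for the ``only if'' direction---merely spells out details the paper's two-line proof leaves to the reader, so this is essentially the same argument, correctly executed.
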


\begin{proof}
Using the representations
\[
  \util(x,s)=\bigl(1+k\util_2(s)\bigr)\util_1(x)+\util_2(s),\qquad
  \util(x,s)=\bigl(1+k\util_1(x)\bigr)\util_2(s)+\util_1(x),
\]
the claim follows immediately.
\end{proof}

\begin{remark}
Proposition~\ref{prop:monotone_utility} shows that monotonicity of $\util,\util_1$, and $\util_2$ restricts the ranges of
\[
  x \mapsto 1+k\util_1(x)
  \qquad\text{and}\qquad
  s \mapsto 1+k\util_2(s).
\]
If $k>0$, this is equivalent to
\[
  \util_1,\util_2 \text{ taking values in } \{-\infty\}\cup[-1/k,\infty).
\]
While a lower bound on $\util_2$ is typically unproblematic since $\ESGSpace$ is bounded, imposing such a bound on $\util_1$ is restrictive, as it caps the risk associated with financial tail events.
In this case, it may be preferable to relax monotonicity in $s$ for small $x$.

If $k<0$, monotonicity instead requires
\[
  \util_1,\util_2 \text{ taking values in }(-\infty,-1/k]\cup\{+\infty\}.
\]
For $k=0$, no range restrictions arise since $\util(x,s)=\util_1(x)+\util_2(s)$.\close
\end{remark}

In view of the discussion above, monotone multi-attribute utilities can be constructed from monotone single-attribute utilities as follows.
Fix $k\in\R$ and non-decreasing functions $\hat{\util}_1:\R\to\R\cup\{-\infty\}$ and $\hat{\util}_2:\ESGSpace\to\R\cup\{-\infty\}$, and define the effective domain
\[
  \effDom(\util)=\bigl\{(x,s)\in\R\times\ESGSpace:
  1+k\hat{\util}_1(x)\ge 0,\; 1+k\hat{\util}_2(s)\ge 0\bigr\}.
\]
Then set
\[
  \util(x,s)=
  \begin{cases}
    \hat{\util}_1(x)+\hat{\util}_2(s)+k\hat{\util}_1(x)\hat{\util}_2(s),
      & (x,s)\in\effDom(\util),\\
    -\infty, & \text{otherwise},
  \end{cases}
\]
so that $\util$ is finite precisely on $\effDom(\util)$.
We refer to this procedure as the \emph{canonical construction}.

\begin{example}[Entropic and Capped Entropic ESG Risk Measures]\label{ex:entropic_Esg_Risk_measure}
As an illustration of the canonical construction, consider exponential single-attribute utilities with parameters $\gamma_1,\gamma_2>0$,
\[
  \hat{\util}_1(x)\defined \frac{1}{\gamma_1}\bigl(1-e^{-\gamma_1 x}\bigr),
  \qquad
  \hat{\util}_2(s)\defined c\frac{1}{\gamma_2}\bigl(1-e^{-\gamma_2 (s-s_0)}\bigr),
\]
where $c>0$ is a scaling factor and $s_0\in\ESGSpace$ a baseline level.
Note that $\hat{\util}_2(s)\ge0$ if and only if $s\ge s_0$, so that $s_0$ normalizes ESG
utility.

The financial shortfall risk measure associated with $\hat{\util}_1$ is the entropic risk measure
\begin{equation}\label{eq:entropic}
  \RiskSF[X]=\frac{1}{\gamma_1}\log\E\bigl[e^{-\gamma_1 X}\bigr];
\end{equation}
see \cite{follmer2002}.
To construct a multi-attribute utility, we first drop global monotonicity and set
\begin{equation}\label{eq:utility_uncapped_entropic}
  \util(x,s)=\hat{\util}_1(x)+\hat{\util}_2(s)+k\hat{\util}_1(x)\hat{\util}_2(s),
\end{equation}
which yields the \emph{entropic ESG risk measure}.
For $k>0$, monotonicity in $x$ holds whenever $s\ge\slow$, where
\[
  \slow\defined -\frac{1}{\gamma_2}\log\Bigl(1+\frac{\gamma_2}{ck}\Bigr)+s_0,
\]
and monotonicity in $s$ holds whenever $x\ge\xlow$, with
\[
  \xlow\defined -\frac{1}{\gamma_1}\log\Bigl(1+\frac{\gamma_1}{k}\Bigr).
\]

Alternatively, applying the canonical construction yields a globally monotone utility $\util_{cap}$ with effective domain
\[
  \effDom(\util_{cap})=\bigl([\xlow,\infty)\times[\slow,\infty)\bigr)\cap
  (\R\times\ESGSpace),
\]
and equal to $-\infty$ outside this set.
The associated risk measure is called the \emph{capped entropic ESG risk measure}.
Although this may assign infinite risk to some positions, boundedness of financial positions $X$, which is commonly assumed in the literature, implies that $\RiskESG$ is infinite only if $\Prob[S<\slow]>0$, that is, when ESG exposure renders the position unacceptable.
Thus, the lower bounds on $\util_1$ and $\util_2$ are not overly restrictive.\close
\end{example}

\subsection{ESG Risk Premia and Indifference Positions}

To analyze the impact of ESG exposure $S$ on risk evaluation, we compare the ESG risk measure $\RiskESG$ associated with $\util=\util_1+\util_2+k\util_1\util_2$ to the purely financial shortfall risk measure
\[
  \RiskSF[X]\defined\inf\bigl\{m\in\R:\E[\util_1(X+m)]\ge0\bigr\}.
\]
We define the \emph{ESG risk premium} by
\[
  \premium[X,S]\defined \RiskESG[X,S]-\RiskSF[X], \qquad (X,S)\in\positions.
\]
A key role is played by the set of ESG exposures
\begin{multline*}
  \indifference\defined \Bigl\{S:\Om\to\ESGSpace \BigSetSep \E[\util_2(S)]\in\R \text{ and } \E[\util(X,S)]=\E[\util_1(X)]\\
    \text{for all }X:\Om\to\R \text{ with }(X,S)\in\positions\Bigr\},
\end{multline*}
called the \emph{indifference ESG positions}.
For $S\in\indifference$,
\[
  \RiskESG[X,S]=\RiskSF[X], \qquad \text{and hence } \premium[X,S]=0.
\]
Thus, $\indifference$ consists of ESG exposures for which ESG considerations do not affect risk.
The condition $\E[\util_2(S)]\in\R$ ensures quantifiability of ESG risk for constant financial positions.

\begin{proposition}[Characterization of Indifference ESG Positions]\label{proposition:indifference}
Assume there exists $\bar x\in\R$ such that $1+k\util_1(\bar x)\in(0,\infty)$.
Then
\[
  \indifference=\bigl\{S:\Om\to\ESGSpace \bigSetSep S \text{ measurable and } \E[\util_2(S)]=0\bigr\}.
\]
Moreover, if $\ESGSpace$ is an interval and $\util_2$ is continuous and non-decreasing on $\{\util_2\neq-\infty\}$, the set $\indifference$ is non-empty if and only if there exists $s_0\in\ESGSpace$ with $\util_2(s_0)=0$.\close
\end{proposition}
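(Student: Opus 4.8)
The plan is to prove both assertions by exploiting the two factorised forms of the multi-attribute utility,
\[
  \util(x,s) = \util_1(x) + \bigl(1 + k\util_1(x)\bigr)\util_2(s)
             = \util_2(s) + \bigl(1 + k\util_2(s)\bigr)\util_1(x),
\]
which were already recorded in the proof of Proposition~\ref{prop:monotone_utility}. The first assertion is an equality of sets, so I would establish two inclusions; the second then follows from it almost immediately.

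For the inclusion $\indifference \subseteq \{S : \E[\util_2(S)] = 0\}$, I would fix $S \in \indifference$ and test the defining identity of $\indifference$ against the single constant position $X \equiv \bar x$. Since membership in $\indifference$ presupposes that $\E[\util_2(S)]$ exists in $\R$, the constant position $(\bar x, S)$ lies in $\positions$, as noted right before the statement, so this test is admissible. The first factorisation gives $\E[\util(\bar x, S)] = \util_1(\bar x) + (1 + k\util_1(\bar x))\E[\util_2(S)]$, and equating this with $\E[\util_1(\bar x)] = \util_1(\bar x)$ leaves $(1 + k\util_1(\bar x))\E[\util_2(S)] = 0$. Because $1 + k\util_1(\bar x) \in (0,\infty)$ by hypothesis, we may divide and conclude $\E[\util_2(S)] = 0$.

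For the reverse inclusion I would take $S$ measurable with $\E[\util_2(S)] = 0$ and an arbitrary $X$ with $(X,S) \in \positions$, and use the first factorisation to write $\E[\util(X,S)] - \E[\util_1(X)] = \E[(1 + k\util_1(X))\util_2(S)]$. Showing that this cross term vanishes is the heart of the matter. Here the integrability encoded in $\positions$ together with the finiteness of $\E[\util_2(S)]$ is needed both to split the expectation legitimately and to control the $\{-\infty\}$ values under the stated conventions, after which the contribution of the product is to be reduced to the normalisation $\E[\util_2(S)] = 0$. I expect this to be the main obstacle, precisely because, unlike the forward direction, it cannot be settled by a single test position: one must control the interaction of $\util_2(S)$ with $\util_1(X)$ across all admissible $X$, i.e.\ account for the dependence structure of $(X,S)$ and not merely its marginals. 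Once the cross term is shown to be zero, $\E[\util(X,S)] = \E[\util_1(X)]$ holds for every admissible $X$, giving $S \in \indifference$.

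For the second assertion the hypothesis of the first remains in force, so $\indifference = \{S : \E[\util_2(S)] = 0\}$ may be used freely. If some $s_0 \in \ESGSpace$ satisfies $\util_2(s_0) = 0$, then the constant exposure $S \equiv s_0$ has $\E[\util_2(S)] = \util_2(s_0) = 0$ and hence lies in $\indifference$, which is therefore non-empty; note that for this constant $S$ the cross term above vanishes outright, so none of the delicate bookkeeping is required. Conversely, if $\indifference \neq \emptyset$, I would pick $S \in \indifference$, so that $\E[\util_2(S)] = 0$ with value in $\R$; this forces $\util_2(S) > -\infty$ almost surely, whence $S$ takes values $\Prob$-a.s.\ in the set on which $\util_2$ is finite, continuous, and non-decreasing. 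If $\util_2(S) = 0$ a.s., any value attained by $S$ furnishes the desired $s_0$. Otherwise the zero-mean random variable $\util_2(S)$ is not a.s.\ zero, so it is strictly positive and strictly negative on sets of positive probability; choosing points $s_-, s_+ \in \ESGSpace$ in the support of $S$ with $\util_2(s_-) < 0 < \util_2(s_+)$, monotonicity forces $s_- < s_+$, the interval assumption yields $[s_-, s_+] \subseteq \ESGSpace$, and the intermediate value theorem applied to the continuous $\util_2$ on $[s_-, s_+]$ produces $s_0$ with $\util_2(s_0) = 0$.
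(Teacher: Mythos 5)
Your forward inclusion and your treatment of the second assertion are sound and essentially match the paper: testing the defining identity of $\indifference$ against a constant position $X\equiv\bar{x}$ with $1+k\util_1(\bar{x})\in(0,\infty)$ is exactly the paper's argument, and your intermediate-value construction of $s_0$ is a correct, more careful expansion of the single sentence the paper devotes to non-emptiness. Importantly, your second-part proof is insulated from the troublesome inclusion: the ``if'' direction uses only the constant exposure $S\equiv s_0$, for which $\util_2(S)\equiv 0$ annihilates the cross term for \emph{every} admissible $X$, and the ``only if'' direction uses only the forward inclusion you did prove.

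The genuine gap is the reverse inclusion $\{S:\E[\util_2(S)]=0\}\subseteq\indifference$, which you flag as ``the heart of the matter'' but do not close---and your instinct about why is exactly right: it cannot be closed by integrability bookkeeping, because for $k\neq 0$ the cross term $\E[\util_2(S)(1+k\util_1(X))]$ is a functional of the joint law of $(X,S)$ and does not reduce to $\E[\util_2(S)]$. Concretely, suppose $\util_1$ is real-valued with $\util_1(\bar{x}+1)>\util_1(\bar{x})$ (as for the entropic utility), let $\E[\util_2(S)]=0$ with $\Prob[\util_2(S)>0]>0$, and take the bounded, quantifiable position $X\defined\bar{x}+\boldsymbol{1}_{\{\util_2(S)>0\}}$. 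Then
\[
  \E\bigl[\util_2(S)\bigl(1+k\util_1(X)\bigr)\bigr]
  = k\bigl(\util_1(\bar{x}+1)-\util_1(\bar{x})\bigr)\,
    \E\bigl[\util_2(S)\boldsymbol{1}_{\{\util_2(S)>0\}}\bigr] \neq 0
\]
for $k\neq 0$, so $S\notin\indifference$. Hence the claimed set equality holds as stated only when $k=0$ (where the cross term is $\E[\util_2(S)]$ itself) or when $\util_2(S)=0$ almost surely; for $k\neq 0$ and non-constant $\util_1$, the indifference set is the smaller class $\{S:\util_2(S)=0\ \Prob\text{-a.s.}\}$. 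You should know that the paper's own proof is silent on precisely this direction: it derives the identity $\E[\util_2(S)(1+k\util_1(X))]=0$, tests it against constants, and then asserts the equivalence, so the step where you stalled is one the paper does not actually carry out---your observation that the converse requires controlling the dependence structure of $(X,S)$, not merely the marginals, identifies a real obstruction rather than a defect peculiar to your attempt. (The paper's later uses survive: the calibration example invokes only the forward implication, and the threshold discussion uses constant indifference positions $S'\equiv s_0$.)
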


\begin{proof}
Let $(X,S)\in\positions$ with $\E[\util_2(S)]$ existing and
\[
  \E[\util_1(X)]=\E[\util(X,S)]
  =\E[\util_1(X)+\util_2(S)+k\util_1(X)\util_2(S)].
\]
This is equivalent to
\[
  \E\bigl[\util_2(S)\bigl(1+k\util_1(X)\bigr)\bigr]=0.
\]
Hence $|1+k\util_1(X)|=\infty$ must have zero probability, implying $1+k\util_1(X)\in[0,\infty)$.
In particular, for constant $X=\bar x$ with $1+k\util_1(\bar x)>0$ we obtain
\[
  S\in\indifference \quad\Longleftrightarrow\quad \E[\util_2(S)]=0.
\]
The final claim follows from the mean value theorem.
\end{proof}

Since $\RiskESG$ is non-increasing whenever $\util$ is non-decreasing, the set of indifference ESG positions separates positive and negative ESG risk premia.
Indeed, for $(X,S),(X',S')\in\positions$ with $X\le X'$ and $S\le S'$, Theorem~\ref{theorem:monotonicity_risk} yields $\RiskESG[X',S']\le \RiskESG[X,S]$.
In particular, for $S'\in\indifference$,
\[
  S\ge S' \quad\Rightarrow\quad \RiskESG[X,S]\le \RiskESG[X,S']=\RiskSF[X].
\]
Thus, indifference positions act as thresholds: favorable ESG exposures reduce risk, while unfavorable ones increase it.
In particular, if $s_0\in\ESGSpace$ satisfies $\util_2(s_0)=0$, then $s_0$ separates favorable from unfavorable ESG outcomes.

\begin{example}[Calibration of ESG Risk Measures]\label{ex:calibration}
Indifference ESG positions provide a natural tool for calibrating ESG risk measures.
Consider again the (uncapped) entropic ESG risk measure with
\[
  \util(x,s)=\util_1(x)+\util_2(s)+k\util_1(x)\util_2(s),
\]
where
\[
  \util_1(x)=\frac{1}{\gamma_1}\bigl(1-e^{-\gamma_1 x}\bigr),\qquad
  \util_2(s)=c\frac{1}{\gamma_2}\bigl(1-e^{-\gamma_2(s-s_0)}\bigr).
\]
The parameter $\gamma_1$ determines the classical financial entropic risk measure
\[
  \RiskSF[X]=\frac{1}{\gamma_1}\log\E\bigl[e^{-\gamma_1 X}\bigr],
\]
and is typically fixed in advance.
The baseline $s_0$ is chosen as the unique ESG level satisfying $\util_2(s_0)=0$, which corresponds to an indifference position separating favorable and unfavorable ESG exposures.
Using an additional random indifference position $S\in\indifference$ allows to identify the ESG risk aversion parameter $\gamma_2$ via the condition $\E[\util_2(S)]=0$.

Finally, the parameters $c>0$ and $k\in\R$ control the magnitude of ESG effects.
Rewriting
\begin{equation}\label{eq:calibration}
  \util(x,s)=\bigl(1+k\util_2(s)\bigr)\util_1(x)+\util_2(s),
\end{equation}
shows that $\util_2(s)$ induces both an additive shift and a relative scaling of the financial utility $\util_1(x)$.
The scaling factor $c$ determines the maximal absolute adjustment, while $k$ governs the strength of the relative scaling.\close
\end{example}

Using the same argument as in the proof of Proposition~\ref{proposition:indifference}, we obtain the following result in the case of independent financial and ESG exposures.

\begin{proposition}[Favorable and Unfavorable ESG Exposures]\label{prop:monotone_expectation}
Let $(X,S)\in\positions$ with $X$ and $S$ independent and assume that $\util_2$ and $s\mapsto\util(x,s)$ are non-decreasing for all $x\in\R$.
Then
\[
  \E[\util_2(S)] \le 0 \;\Rightarrow\; \RiskESG[X,S]\ge \RiskSF[X]
  \quad\text{and}\quad
  \E[\util_2(S)] \ge 0 \;\Rightarrow\; \RiskESG[X,S]\le \RiskSF[X].\closeEqn
\]
\end{proposition}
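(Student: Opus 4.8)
The plan is to reduce the comparison to the purely financial risk measure by exploiting independence to factorize the expected multi-attribute utility. First I would fix $m\in\R$ and combine the standing representation $\util(x,s) = \util_1(x) + \util_2(s) + k\util_1(x)\util_2(s)$ with the independence of $X$ and $S$ to obtain
\[
  \E\bigl[\util(X+m,S)\bigr] = f(m) + A\bigl(1 + kf(m)\bigr),
\]
where $f(m)\defined\E[\util_1(X+m)]$ and $A\defined\E[\util_2(S)]$. The only nontrivial input here is that the cross term factorizes as $\E[\util_1(X+m)\util_2(S)] = f(m)\,A$, which holds precisely because $X$ and $S$ are independent. Note that $f(m)$ is exactly the quantity governing the financial shortfall risk measure, since $\RiskSF[X] = \inf\{m\in\R : f(m)\geq 0\}$.

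The next step is to extract the sign information hidden in the monotonicity hypothesis. Because $\util_2$ is non-decreasing and $s\mapsto\util(x,s)$ is non-decreasing for every $x$, Proposition~\ref{prop:monotone_utility} yields $1 + k\util_1(x)\in\{-\infty\}\cup[0,\infty]$ for all $x\in\R$. Integrating this over the law of $X+m$ then gives $1 + kf(m)\geq 0$ for every $m$, which is the key inequality controlling the sign of the correction term $A\bigl(1 + kf(m)\bigr)$.

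With this in hand the conclusion follows from a case distinction on the sign of $A$ together with monotonicity of the infimum. If $A\leq 0$, then $A\bigl(1 + kf(m)\bigr)\leq 0$, so $\E[\util(X+m,S)]\leq f(m)$ for every $m$; hence the acceptance set $\{m\in\R : \E[\util(X+m,S)]\geq 0\}$ is contained in $\{m\in\R : f(m)\geq 0\}$, and passing to infima gives $\RiskESG[X,S]\geq\RiskSF[X]$. The case $A\geq 0$ is symmetric: there $\E[\util(X+m,S)]\geq f(m)$, the reverse inclusion of acceptance sets holds, and one obtains $\RiskESG[X,S]\leq\RiskSF[X]$.

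The main obstacle I anticipate is purely measure-theoretic: justifying both the factorization of the cross term and the passage $1 + kf(m) = \E[1 + k\util_1(X+m)]\geq 0$ when $\util_1$, and hence $\util$, may take the value $-\infty$. I would dispose of this by first observing that whenever $f(m) = -\infty$ the value $m$ lies in neither acceptance set: indeed $\util_1(X+m) = -\infty$ on a set of positive probability forces $\util(X+m,S) = -\infty$ there, so $\E[\util(X+m,S)] = -\infty$ by the integrability of $\util_+(X+m,S)$ built into $(X,S)\in\positions$. Such $m$ can thus be discarded, and on the remaining range all relevant expectations are finite, so the independence factorization and the sign argument above go through verbatim.
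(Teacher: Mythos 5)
Your proposal is correct and follows essentially the same route as the paper's proof: factorize $\E[\util(X+m,S)]$ via independence into $\E[\util_1(X+m)] + \E[\util_2(S)]\,\E[1+k\util_1(X+m)]$, invoke Proposition~\ref{prop:monotone_utility} to sign the factor $1+k\util_1$, and conclude by an inclusion of acceptance sets. The only cosmetic difference is how the $-\infty$ values are handled: the paper derives $\util(X+m,S)\neq-\infty$ almost surely directly from the acceptance condition $\E[\util(X+m,S)]\geq 0$, whereas you discard the offending $m$ up front; both patches lead to the same argument.
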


\begin{proof}
We prove only the first implication.
It suffices to show that
\[
  \big\{m:\E[\util(X+m,S)]\geq0\big\}\subseteq \big\{m:\E[\util_1(X+m)]\geq 0\big\}.
\]
Fix $m$ with $\E[\util(X+m,S)]\ge0$.
Then
\begin{align*}
  0 \leq \E[\util(X+m,S)]
  &= \E\bigl[\util_1(X+m)\bigr]
    + \E\bigl[\util_2(S)\bigl(1 + k\util_1(X+m)\bigr)\bigr]\\
  &= \E\bigl[\util_1(X+m)\bigr]
    + \E\bigl[\util_2(S)\bigr]\E\bigl[1 + k\util_1(X+m)\bigr],
\end{align*}
by independence.
Since $\util(X+m,S)\neq-\infty$ a.s., Proposition~\ref{prop:monotone_utility} implies $\E[1+k\util_1(X+m)]\ge0$.
If $\E[\util_2(S)]\le0$, we obtain $0\le\E[\util_1(X+m)]$, which proves the claim.
\end{proof}

The independence assumption can be relaxed by requiring
\[
  \E[\util_2(S)\bigl(1+k\util_1(X+m)\bigr)]\le0
  \quad\text{for all }m\text{ with }\E[\util(X+m,S)]\ge0,
\]
which again yields $\RiskESG[X,S]\ge\RiskSF[X]$.
This condition, however, depends on the joint distribution of $(X,S)$ and is less convenient to verify.

\subsection{Convex ESG Risk Measures}

We conclude by deriving conditions ensuring convexity of $\RiskESG$, which naturally follow from concavity properties of the utility function $\util$.

\begin{theorem}[Convexity]
Assume that $x\mapsto\util(x,s)$ is concave for all $s\in\ESGSpace$ and let $S:\Om\to\ESGSpace$.
Then
\[
  X\mapsto \RiskESG[X,S]\quad\text{is convex on } \{X:\Om\to\R\mid (X,S)\in\positions\}.
\]
If $\util$ is jointly concave, then
\[
  (X,S)\mapsto \RiskESG[X,S]\quad\text{is convex on }\positions.\closeEqn
\]
\end{theorem}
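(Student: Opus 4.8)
The plan is to derive both convexity statements from the infimum representation of $\RiskESG$ together with the concavity hypothesis on $\util$, exploiting the single algebraic identity that, for a weight $\lambda\in[0,1]$ and cash amounts $m_0,m_1\in\R$, a convex combination of two cash-shifted positions is itself a cash-shifted convex combination: $\lambda(X_1+m_1)+(1-\lambda)(X_0+m_0) = \lambda X_1 + (1-\lambda)X_0 + m_\lambda$ with $m_\lambda \defined \lambda m_1 + (1-\lambda)m_0$. The whole proof hinges on this bookkeeping, which turns a concavity inequality for $\util$ into an acceptability statement for the combined position.

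For the first claim, fix $X_0,X_1$ with $(X_i,S)\in\positions$ and $\lambda\in[0,1]$, and assume the right-hand side is finite (otherwise there is nothing to prove). Given $\eps>0$, I would invoke the definition of the infimum to pick $m_i\in\R$ with $\E[\util(X_i+m_i,S)]\geq 0$ and $m_i\leq \RiskESG[X_i,S]+\eps$. Concavity of $x\mapsto\util(x,s)$ applied pointwise at $s = S(\omega)$ gives
\[
  \util\bigl(\lambda X_1 + (1-\lambda)X_0 + m_\lambda,\,S\bigr)
  \geq \lambda\,\util(X_1+m_1,S) + (1-\lambda)\,\util(X_0+m_0,S).
\]
Taking expectations and using linearity (legitimate here since each $\E[\util(X_i+m_i,S)]$ lies in $[0,\infty)$, being $\geq 0$ with finite positive part) shows the left-hand side has non-negative expectation. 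Hence $m_\lambda$ belongs to the acceptance set of $\lambda X_1 + (1-\lambda)X_0$, so that $\RiskESG[\lambda X_1 + (1-\lambda)X_0,S] \leq m_\lambda \leq \lambda\RiskESG[X_1,S] + (1-\lambda)\RiskESG[X_0,S] + \eps$. Letting $\eps\downarrow 0$ yields convexity in $X$.

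The second claim follows from the identical template, now applying the joint concavity of $\util$: with $(X_i,S_i)\in\positions$ and $(X_\lambda,S_\lambda)\defined\lambda(X_1,S_1)+(1-\lambda)(X_0,S_0)$, one has $(X_\lambda + m_\lambda, S_\lambda) = \lambda(X_1+m_1,S_1) + (1-\lambda)(X_0+m_0,S_0)$, so joint concavity bounds $\util(X_\lambda+m_\lambda,S_\lambda)$ below by the convex combination of the $\util(X_i+m_i,S_i)$, and the same expectation-and-infimum step applies. Note that this statement presupposes that $\ESGSpace$ is convex, so that $S_\lambda$ takes values in $\ESGSpace$; this is consistent with the interval case (e.g.\ $[0,1]$ or $[0,100]$) and is where the joint statement meaningfully lives.

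The main obstacle is not the core inequality but the domain bookkeeping needed to make the statement well-posed: one must ensure that the convex combination again lies in $\positions$, i.e.\ that $\E[\util_+(\lambda X_1 + (1-\lambda)X_0 + m, S)]<\infty$ for all $m$, so that $\RiskESG$ is even defined at the combined position and the domain $\{X:(X,S)\in\positions\}$ is convex. Concavity of $\util$ only yields lower bounds, so positive-part integrability must be controlled separately, e.g.\ by majorizing the concave $\util(\argdot,s)$ with an affine supporting function and invoking integrability of the $X_i$; I would isolate this as a short preliminary observation. A secondary point requiring care is the treatment of infinite values: the $\eps$-approximation step needs the endpoint acceptance sets to be non-empty (finite risk at $X_0,X_1$), and one should verify that the expectations never produce an $\infty-\infty$ ambiguity, which is exactly why restricting attention to the regime $\E[\util(X_i+m_i,S)]\in[0,\infty)$ is convenient.
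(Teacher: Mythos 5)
Your proof is correct and takes essentially the same route as the paper: choose ($\eps$-)near-optimal cash amounts $m_1,m_0$ with $\E[\util(X_i+m_i,S_i)]\geq 0$, use concavity of $\util$ pointwise to conclude that $m_\lambda=\lambda m_1+(1-\lambda)m_0$ is acceptable for the convex combination, and then pass to the infimum (your $\eps\downarrow 0$ step is just the paper's ``minimization over both $m_1$ and $m_2$''). The domain-bookkeeping points you flag --- convexity of $\ESGSpace$ and membership of the combined position in $\positions$ --- are genuine but are left implicit in the paper's own proof as well, so they do not separate your argument from it.
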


\begin{proof}
Let $\lambda\in[0,1]$ and $(X_1,S_1),(X_2,S_2)\in\positions$, assuming $S_1=S_2$ when concavity holds only in $x$.
If either $\RiskESG[X_i,S_i]=\infty$, the claim is trivial; otherwise choose $m_i\in\R$ such that
\[
  \E[\util(X_i+m_i,S_i)]\ge0,\qquad i=1,2.
\]
By concavity of $\util$,
\begin{align*}
  0 &\le \lambda\E[\util(X_1+m_1,S_1)] + (1-\lambda)\E[\util(X_2+m_2,S_2)]\\
    &\le \E\bigl[\util(\lambda X_1+(1-\lambda)X_2+m_\lambda,\lambda S_1+(1-\lambda)S_2)\bigr],
\end{align*}
with $m_\lambda=\lambda m_1+(1-\lambda)m_2$.
Hence,
\[
  \RiskESG[\lambda X_1+(1-\lambda)X_2,\lambda S_1+(1-\lambda)S_2]
  \le m_\lambda,
\]
and minimizing over $m_1,m_2$ yields convexity.
\end{proof}

While convexity in the financial component $X$ is desirable as it promotes diversification, it is less clear whether a risk measure should also be convex in the ESG component.
Since ESG exposures are often harder to quantify, it is reasonable to allow more general, possibly non-concave utilities $\util_2$, such as S-shaped functions of the form
\[
  \util_2(s) = \begin{cases}
    (s-s_0)^\gamma & \text{if }s\geq s_0\\
    -\lambda(s-s_0)^\gamma & \text{if }s\leq s_0
  \end{cases}\qquad s\in\ESGSpace.
\]
Here, the reference point $s_0\in\ESGSpace$ satisfies $\util_2(s_0) = 0$, that is, it is the benchmark ESG level separating favorable from unfavorable exposures in the spirit of Proposition~\ref{prop:monotone_expectation}.

\section{ESG Risk of S\&P 500 Companies and Minimum Risk Portfolios} \label{sec:applications_of_esg_risk_measures}

In this section, we evaluate ESG risk measures empirically using S\&P 500 stock prices and Sustainalytics ESG Risk Ratings from October 2021 to January 2025. We focus on the uncapped entropic ESG risk measure (Example~\ref{ex:entropic_Esg_Risk_measure}), compare it with the classical entropic risk measure (Equation~\eqref{eq:entropic}), and analyze how uncertainty in ESG ratings affects risk assessments. Finally, we study the impact of ESG risk on portfolio selection by computing and comparing minimum risk portfolios with and without ESG risk.

\subsection{Data Description and Preparation}\label{subsec:data_description}

Our dataset consists of monthly \emph{Sustainalytics ESG Risk Ratings} and adjusted closing stock prices for
S\&P 500 companies from 2021--10--01 to 2025--01--01, obtained from \emph{Bloomberg}. Companies with incomplete data (usually due to not being a constituent of the S\&P 500 over the full time period) are excluded, resulting in 40 monthly observations for a total of 483 companies.

The Sustainalytics ESG Risk Ratings range from 0 (no ESG risk) to 100 and are classified into five categories:
\[
  \text{Negligible: 0--9,}
  \qquad
  \text{Low: 10--19,}
  \qquad
  \text{Medium: 20--29,}
  \qquad
  \text{High: 30--39,}
  \qquad
  \text{Severe: 40+.}
\]
Table~\ref{tab:summary} provides summary statistics of the data set.

\begin{table}[ht]
    \captionsetup{position=above, singlelinecheck=false, labelfont=bf}
    \caption{\textbf{Descriptive statistics of the \emph{Sustainalytics ESG Risk Ratings} data set}}
    \centering

    \begin{subtable}{\textwidth}
        \centering
        \caption{\textbf{Summary statistics}}
        \begin{tabularx}{\textwidth}{l>{\centering\arraybackslash}p{2cm} >{\centering\arraybackslash}p{2cm} 
                                   >{\centering\arraybackslash}p{2cm} >{\centering\arraybackslash}p{2cm} 
                                   >{\centering\arraybackslash}p{2cm} >{\centering\arraybackslash}p{2cm}}
                      & Min  & Q1    & Median & Mean  & Q3    & Max \\ \hline
    Overall           & 5.28 & 16.12 & 21.08  & 21.49 & 25.84 & 47.06 \\ 
    2025--01--01      & 7.28 & 15.35 & 20.09  & 20.54 & 24.80 & 43.66 \\ 
    \Xhline{3\arrayrulewidth}
\end{tabularx}

    \end{subtable}

    \vspace{1em}

    \begin{subtable}{\textwidth}
        \centering
        \caption{\textbf{Number of companies per risk category}}
        \begin{tabularx}{\textwidth}{l>{\centering\arraybackslash}p{2.4cm} >{\centering\arraybackslash}p{2.4cm} 
                                   >{\centering\arraybackslash}p{2.4cm} >{\centering\arraybackslash}p{2.4cm} 
                                   >{\centering\arraybackslash}p{2.4cm}}
                                & Negligible & Low & Medium & High & Severe \\ \hline
            Overall      & 4          & 192 & 222    & 60   & 5 \\ 
            2025--01--01 & 7          & 207 & 218    & 48   & 3 \\ \Xhline{3\arrayrulewidth}
        \end{tabularx}
    \end{subtable}

    \caption*{\small{(a): Minimum, 1st quartile, median, mean, 3rd quartile, and maximum Sustainalytics ESG Risk Ratings over all dates (``overall'') and on the last date (``2025--01--01'').\\
    (b): Number of companies per Sustainalytics ESG Risk Ratings risk category over all dates (``overall'') and on the last date (``2025--01--01'').}}
    \label{tab:summary}
\end{table}

Throughout \cref{sec:risk_measures_for_ESG_assets}, higher ESG ratings correspond to lower risk, whereas in our dataset higher ratings indicate higher ESG risk. We therefore transform and normalize the raw ratings to the interval $[0,1]$. To achieve this, let $S^{\raw}$ denote the raw ESG rating given in the dataset and define
\[
  S^\raw \mapsto S^\norma \defined \frac{50-S^{\raw}}{50}.
\]
This transformation flips the order on the ratings, so that larger values of $S^\norma$ correspond to a better ESG rating. Since the observed ESG ratings range from $5.28$ to $47.06$ (\cref{tab:summary}), it follows that $S^\norma$ is $[0,1]$-valued.
\cref{tab:meanvola} reports descriptive statistics of the stock prices and normalized ESG ratings.

\begin{table}[ht]
\captionsetup{position=above, singlelinecheck=false, labelfont=bf}
\caption{\textbf{Descriptive statistics of stock prices and normalized ESG ratings}}
\centering
  \begin{tabularx}{\textwidth}{l>{\centering\arraybackslash}p{2.7cm} >{\centering\arraybackslash}p{3.8cm} >{\centering\arraybackslash}p{3.0cm} >{\centering\arraybackslash}p{2.8cm}}
                           & Samples  & Mean return   & Volatility \\ \hline
    Stock prices           & 40       & 6.2\%          & 30.6\% \\
    Normalized ESG ratings & 40       & 2.3\%          &  8.1\% \\ \Xhline{3\arrayrulewidth} 
  \end{tabularx}
\caption*{\small{%
  The first column reports the number of monthly observations (October 2021–January 2025), while the second and third columns show the annualized mean and the annualized volatility of monthly log returns (stock prices) and monthly log changes (normalized ESG ratings).}}
\label{tab:meanvola}
\end{table}

Another feature of the dataset is that ESG ratings do not necessarily change every month. Averaged across the 483 companies, ratings change only 9.7 times over the 40-month sample period. This persistence must be taken into account when modeling the distribution of normalized ESG ratings.

As discussed in \cref{rem:additive}, choosing $k=0$ in the utility function $\util$ implies additive independence of the attributes ``financial risk'' and ``ESG risk'', so that dependence between prices and ESG ratings is ignored when computing risk. Put differently, this means that for $k=0$ the joint distribution of prices and ratings does not matter, i.e.\ dependent and independent data is treated identically. \cref{fig:correlation_analysis}, however, shows clear dependence between stock returns and ESG risk ratings, implying that additive independence would discard relevant information in the data and motivating the use of utility functions with $k\neq 0$.

\begin{figure}[hbt!]
\centering
\includegraphics[trim= 0 25px 0 5px, clip, width=\textwidth ]{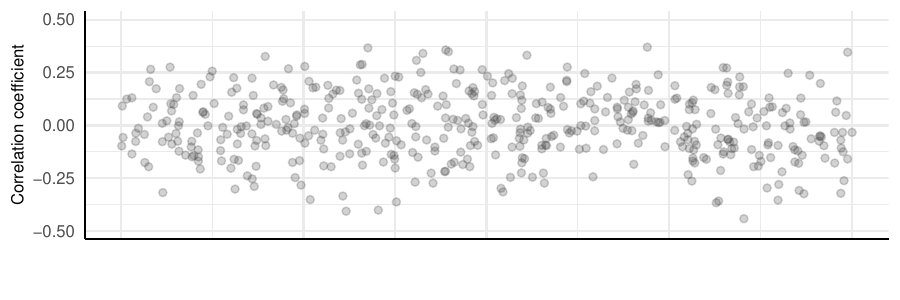} 
\captionsetup{labelfont=bf}
\caption{%
\textbf{Correlation analysis between historical stock price returns and changes in normalized ESG ratings}\\
\small{This figure shows Pearson correlation coefficients between monthly stock price log returns and monthly log changes in normalized ESG ratings. Each dot represents one company.}}
\label{fig:correlation_analysis}
\end{figure}

\subsection{Forecasting of Stock Prices and ESG Ratings}\label{subsec:simulation}

In the following subsections, we compute risk measures for positions $(X,S^\norma)$ over a one-month investment horizon $\Delta\defined 1/12$. To specify the joint distribution, we further rescale normalized ESG ratings $S^\norma$ to ensure comparable codomains of stock prices and ESG ratings by defining
\[
  S^\norma \mapsto S^{\mathrm{rescaled}} \defined \tan\Bigl(\frac{\pi}{2}S^\norma\Bigr),
\]
which maps values to $\R_+$.
Monthly stock log returns and monthly log changes of rescaled ESG ratings are modeled as
\[
  R_X \defined \mu_X \Delta + \sqrt{\Delta} Z_1
  \qquad\text{and}\qquad
  R_S \defined J_S\Bigl(\mu_S \Delta + \sqrt{\Delta} Z_2\Bigr),
\]
where, $\mu_X,\mu_S\in\R$ denote annualized mean returns. The random variable $J_S\in\{0,1\}$ is Bernoulli with
\[
  p\defined\Prob[\{J_S=1\}]\in[0,1]
\]
representing the probability of a change in the ESG rating within one month. The vector $Z=(Z_1,Z_2)^\top$ is drawn from a two dimensional normal distribution with zero mean and covariance matrix
\[
  C \defined \begin{pmatrix}
    \sigma_X^2 & \rho\sigma_X\sigma_S\\
    \rho\sigma_X\sigma_S & \sigma_S^2
  \end{pmatrix},
\]
where $\sigma_X,\sigma_S>0$ are annualized volatilities and $\rho\in[-1,1]$ is the correlation between the two returns conditional on $J_S = 1$, that is, conditional on the ESG rating changing within a month. Although $J_s$ and $Z$ are assumed to be independent, the returns $R_X$ and $R_S$ are clearly correlated due to the dependence of $Z_1$ and $Z_2$.

Let $N\in\R$ denote today's dollar investment in the stock. The financial exposure one month into the future is
\begin{equation}\label{eq:financial_exposure}
  X \defined N\bigl(e^{R_X} - 1\bigr).
\end{equation}
This definition corresponds to log-normal stock prices, as in the Black–Scholes model. Similarly, the rescaled ESG rating after one month is given by
\[
  S^{\mathrm{rescaled}} \defined S^{\mathrm{rescaled}}_0e^{R_S},
\]
where $S^{\mathrm{rescaled}}_0$ denotes today's rescaled ESG rating. Conditional on $J_S = 1$,  $S^{\mathrm{rescaled}}$ is log-normal, and the normalized ESG rating one month into the future is recovered via
\begin{equation}\label{eq:ESG_exposure}
  S^\norma \defined \frac{2}{\pi}\arctan\bigl(S^{\mathrm{rescaled}}\bigr) = \frac{2}{\pi}\arctan\bigl(S^{\mathrm{rescaled}}_0e^{R_S}\bigr).
\end{equation}
The parameters $\mu_X,\mu_S,\sigma_X,\sigma_S,\rho$ and $p$ are calibrated from the dataset discussed in the previous subsection, and we set the investment amount $N \defined \$1$ for simplicity. This fully specifies the distribution of $(X,S^\norma)$ one month into the future.

\subsection{ESG Risk Measures for Single Assets}\label{subsec:ESG_risk_measures_for_single_assets}

We compute ESG risk measures associated with $(X,S^\norma)$ as defined in \cref{eq:financial_exposure} and \cref{eq:ESG_exposure} for each company in the dataset described in \cref{subsec:data_description}.
Specifically, we compare the uncapped entropic ESG risk measure from \cref{ex:entropic_Esg_Risk_measure} with the classical entropic risk measure given by \cref{eq:entropic}.
The entropic ESG risk measure is obtained by choosing
\[
  \util_1(x) \defined \frac{1}{\gamma_1}\Bigl(1-e^{-\gamma_1 x}\Bigr)
  \qquad\text{and}\qquad
  \util_2(s) \defined c\frac{1}{\gamma_2}\Bigl(1-e^{-\gamma_2 (s-s_0))}\Bigr),
  \qquad (x,s)\in\R\times\ESGSpace,
\]
with parameters $\gamma_1,\gamma_2,c>0$ and $s_0\in\ESGSpace = [0,1]$.
In addition, the parameter $k\in\R$ in \cref{eq:utility_uncapped_entropic} is specified.
Parameter values used in the empirical analysis are reported in \cref{tab:parameters}.

\begin{table}[ht]
\centering
\captionsetup{position=above, singlelinecheck=false, labelfont=bf}
\caption{\textbf{Parameter choices for the uncapped entropic ESG risk measure}}
\begin{tabularx}{\textwidth}{%
  >{\centering\arraybackslash}X 
   >{\centering\arraybackslash}X 
   >{\centering\arraybackslash}X
   >{\centering\arraybackslash}X
   >{\centering\arraybackslash}X%
}
    $\gamma_1$  & $\gamma_2$  & $c$ & $k$ & $s_0$\\
    \hline
    $1$         & 0.75        & 0.1 & 1   & 0.5982 \\ \Xhline{3\arrayrulewidth} 
\end{tabularx}
\caption*{\small The baseline level $s_0$ equals the median normalized ESG rating.}
\label{tab:parameters}
\end{table}

Recalling from Example~\ref{ex:calibration} that the multi-attribute utility function $\util$ can be written as
\[
  \util(x,s) = \bigl(1+k\util_2(s)\bigr)\util_1(x) + \util_2(s),\qquad (x,s)\in\R\times\ESGSpace,
\]
with $\ESGSpace = [0,1]$, the parameter choices imply minimum and maximum additive shifts of
\[
  \util_2(0) \approx -0.0755
  \qquad\text{and}\qquad
  \util_2(1) \approx 0.0347,
\]
corresponding to dollar adjustments of the same magnitude for $N = \$1$. The associated minimum and maximum relative scalings of $\util_1$ are
\[
  1+k\util_2(0) \approx 0.9245
  \qquad\text{and}\qquad
  1+k\util_2(1) \approx 1.0347.
\]

Since the expectation in 
\[
  \RiskESG[X,S^\norma]
  \defined \inf\bigl\{m\in\R:\E\bigl[\util(X+m,S^\norma)\bigr]\geq 0\bigr\}
\]
cannot be computed in closed form, we approximate it numerically using $M\defined 10,\!000$ Monte Carlo samples from the distribution of $(X,S^\norma)$.
The infimum in the definition of $\RiskESG[X,S^\norma]$ is then computed using a standard numerical optimizer.

\begin{figure}[ht]
\centering
\includegraphics[trim= 0 5px 0 5px, clip, width=\textwidth]{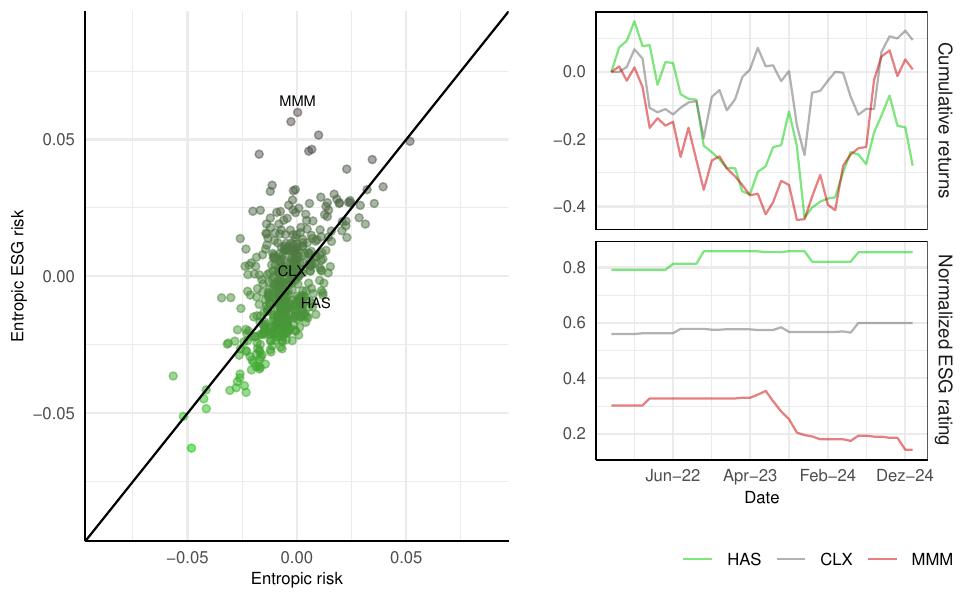}
\captionsetup{labelfont=bf}
\caption{\textbf{Entropic risk and the entropic ESG risk for S\&P 500 companies}\\ \small{Left: Classical entropic risk (x-axis) against entropic ESG risk (y-axis). Right: Cumulative stock price returns and normalized ESG ratings for CLX, HAS, and MMM.}}
\label{fig:scatter_esg_and_classical}
\end{figure}

The results are depicted in \cref{fig:scatter_esg_and_classical}.
The left-hand side compares the classical entropic risk measure $\RiskSF[X]$ with the entropic ESG risk measure $\RiskESG[X,S^\norma]$ across all 483 companies.
Firms on the diagonal line have indifference ESG positions ($\RiskSF[X]=\RiskESG[X,S^\norma]$), while observations above (below) the diagonal exhibit higher (lower) risk once ESG criteria is incorporated.

\begin{table}[ht]
\centering
\captionsetup{position=above, singlelinecheck=false, labelfont=bf}
\caption{\textbf{Entropic risk and entropic ESG risk of representative companies}}
\begin{tabularx}{\textwidth}{%
  l>{\centering\arraybackslash}p{2.3cm}
  >{\centering\arraybackslash}X
  >{\centering\arraybackslash}X
  >{\centering\arraybackslash}X
  >{\centering\arraybackslash}X
  >{\centering\arraybackslash}X%
}
  Company name        & Symbol & Entropic risk     & Entropic ESG risk & Normalized ESG rating & ESG risk category\\
  \hline
  Hasbro              & HAS    & \phantom{-}0.0087 &           -0.0139 & 0.854                 & Negligible       \\
  The Chlorox Company & CLX    &           -0.0022 &           -0.0024 & 0.600                 & Medium           \\
  3M                  & MMM    & \phantom{-}0.0004 & \phantom{-}0.0599 & 0.142                 & Severe           \\
  \Xhline{3\arrayrulewidth}
\end{tabularx}
\caption*{\small
  Entropic risk and entropic ESG risk for representative companies, together with their normalized ESG rating in January 2025 and corresponding ESG risk categories.}
\label{table:representatives}
\end{table}

The right-hand side of \cref{fig:scatter_esg_and_classical} illustrates three representative companies; see also \cref{table:representatives} for the specific values.
Their cumulative returns evolve similarly over October 2021 -- January 2025, resulting in only small differences in classical entropic risk. In contrast, their ESG ratings differ substantially, leading to pronounced differences in entropic ESG risk. MMM exhibits a positive ESG risk premium of $0.0599-0.0004 = 0.0595$ due to its unfavorable ESG rating, while HAS shows a negative ESG premium of $-0.0226$ reflecting its positive rating. For CLX, the ESG risk premium is negligible ($-0.0002$), consistent with its normalized ESG rating of $0.600$ in January 2025 being close to the baseline level $s_0$.

\begin{table}[ht]
\captionsetup{position=above, singlelinecheck=false, labelfont=bf}
\caption{\textbf{Companies with the most significant ESG risk premia}}
\centering

\newcolumntype{B}{X}
\newcolumntype{M}{>{\centering\arraybackslash\hsize=.55\hsize}X}
\newcolumntype{S}{>{\centering\arraybackslash\hsize=.6\hsize}X}

\begin{subtable}{\textwidth}
  \centering
  \caption{\textbf{Largest positive ESG risk premia}}
  \begin{tabularx}{\textwidth}{BMSS}
    Company name    & Entropic risk     & Entropic ESG risk & ESG risk premium\\
    \hline
    Exxon Mobil     &           -0.0173 & 0.0446            & 0.0619  \\
    3M              & \phantom{-}0.0004 & 0.0599            & 0.0595  \\ 
    Apache          &           -0.0027 & 0.0565            & 0.0593  \\
    Chevron         &           -0.0113 & 0.0333            & 0.0446  \\ 
    TransDigm Group &           -0.0202 & 0.0238            & 0.0440  \\ 
    \Xhline{3\arrayrulewidth}
  \end{tabularx}
  \newline
  \vspace*{1em}
\end{subtable}

\begin{subtable}{\textwidth}
  \centering
  \caption{\textbf{Largest negative ESG risk premia}}
  \begin{tabularx}{\textwidth}{BMSS}
    Company name          & Entropic risk     & Entropic ESG risk & ESG risk premium\\
    \hline
    Hasbro                & \phantom{-}0.0087 & -0.0139           & -0.0226         \\
    CDW                   & \phantom{-}0.0008 & -0.0215           & -0.0222         \\
    AvalonBay Communities &           -0.0021 & -0.0235           & -0.0213         \\
    Danaher               & \phantom{-}0.0040 & -0.0168           & -0.0208         \\
    IPG Photonics         & \phantom{-}0.0050 & -0.0152           & -0.0202         \\
    \Xhline{3\arrayrulewidth}
  \end{tabularx}
\end{subtable}
\caption*{\small
Companies with the largest ESG risk premia. Panel (a) shows the largest positive premia, and panel (b) the largest negative premia.}
\label{tab:deviation}
\end{table}

\cref{tab:deviation} summarizes the companies with the most pronounced ESG risk premia. Panel (a) reports the five firms with the largest positive ESG risk premia, reflecting severe risk penalties due to unfavorable ESG ratings, while panel (b) reports the five firms with the largest negative risk premia due to their positive ESG ratings.

To assess the sensitivity of the entropic ESG risk to ESG ratings, we study the effect of deterministic shifts in the normalized ESG rating.  For the representative companies HAS,
CLX, and MMM, we consider
\[
  m \mapsto \RiskESG[X,\min\{\max\{S^\norma+m,0\},1\}],\qquad m\in[-1,1],
\]
which enforces the rating to stay within its range $\ESGSpace=[0,1]$.
The graph of this mapping can be interpreted as the extent to which companies can improve their ESG risk by improving their ESG ratings.

\begin{figure}[hbt!]
\centering
\includegraphics[trim= 0 5px 0 5px, clip, width=\textwidth ]{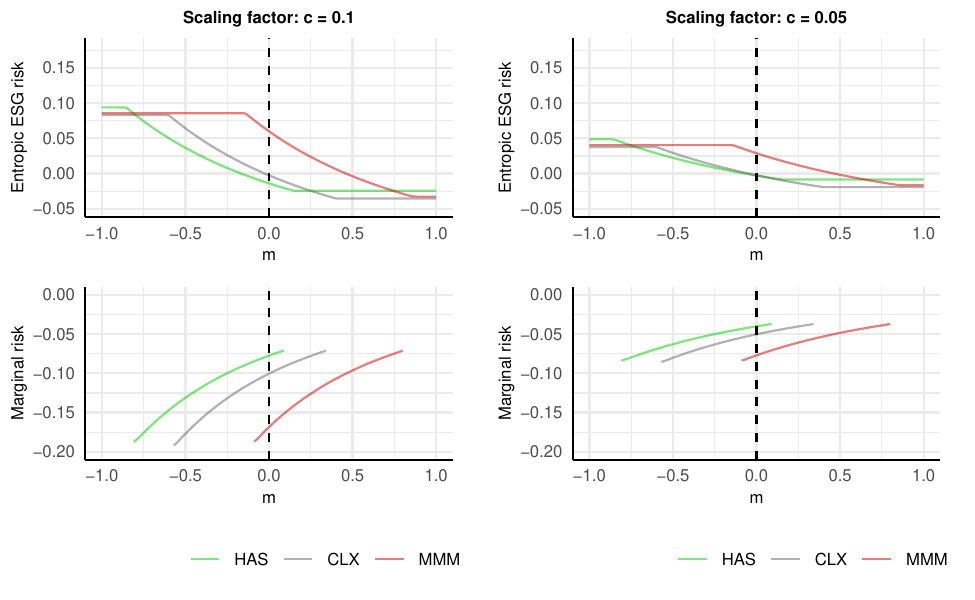}
\captionsetup{labelfont=bf}
\caption{\textbf{Impact of shifts in the normalized ESG rating}\\
\small{%
Entropic ESG risk (top) and marginal entropic ESG risk (bottom) as functions of rating shifts for $c=0.1$ (left) and $c=0.05$ (right).
The plots for the marginal entropic ESG risk are restricted to the range on which the entropic ESG risk is not flat.%
}}
\label{fig:change_in_S}
\end{figure}

\cref{fig:change_in_S} reports the results for two different choices of the scaling parameter $c$ in $\util_2$.
At the boundaries $m\to -1$ and $m\to 1$, the entropic ESG risk becomes flat as
\[
  \RiskESG[X,\min\{\max\{S^\norma+m,0\},1\}] \to \begin{cases}
    \RiskESG[X,0] & \text{as }m\downarrow -1,\\
    \RiskESG[X,1] & \text{as }m\uparrow 1.\\
  \end{cases}
\]
With this in mind, we observe that the representative companies are affected by shifts in the ESG rating in different ways:
Due to its positive ESG rating, HAS has little potential for further risk reduction but strong sensitivity to negative ESG rating shocks, whereas MMM can substantially reduce risk through ESG improvements and displays the largest marginal sensitivity. Overall, firms with unfavorable ESG ratings have greater potential for risk reduction but are also more exposed to adverse rating changes. Lower values of the scaling parameter $c$ reduce this sensitivity by flattening the entropic ESG risk profile.

\begin{figure}[hbt!]
\centering
\includegraphics[trim= 0 5px 0 5px, clip, width=\textwidth ]{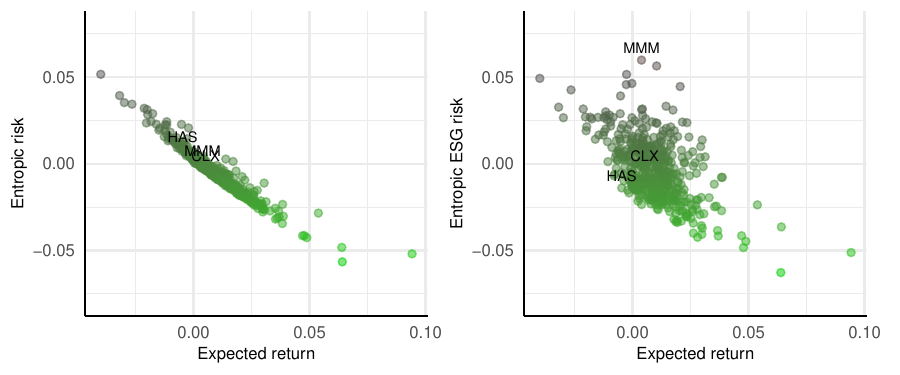}
\captionsetup{labelfont=bf}
\caption{\textbf{%
Entropic risk (left) and entropic ESG risk (right) plotted against the expected return of S\&P 500 companies}\\
\small{%
The plot highlights the risk assessment of the three representative companies HAS, CLX, and MMM relative to their expected returns.
}}
\label{fig:scatter_risk_and_mean}
\end{figure}

We conclude the analysis of individual assets by examining in \cref{fig:scatter_risk_and_mean} the effect of expected returns $\mu_X$ on the risk measures.
While both the classical and ESG risk measure tend to assign less risk to assets with higher returns, the entropic ESG risk exhibits substantially greater dispersion. This indicates that classical entropic risk is largely driven by expected returns, whereas the entropic ESG risk differentiates assets with similar returns based on their ESG performance.

\subsection{Minimum Risk Portfolios}\label{subsec:minimum_risk_portfolios}

We conclude the empirical analysis by studying the impact of ESG risk on minimum risk portfolios. We consider a basket of eleven S\&P 500 stocks (\cref{tab:pf_optimization}), constructed by selecting one stock per industry sector based on market capitalization.

\begin{table}[ht]
\centering
\captionsetup{position=above, singlelinecheck=false, labelfont=bf}
\caption{\textbf{Basket of S\&P 500 stocks considered for portfolio optimization}}
\begin{tabularx}{\textwidth}{%
  l>{\centering\arraybackslash}p{1.7cm}
  >{\centering\arraybackslash}p{2.2cm}
  l>{\centering\arraybackslash}X
  >{\centering\arraybackslash}X%
}
  Company name                  & Symbol & Normalized ESG rating & Sector                \\
  \hline                                                                                   
  Apple                         & AAPL   & 0.664                 & Information technology\\
  American Tower                & AMT    & 0.748                 & Real estate           \\
  Amazon.com                    & AMZN   & 0.478                 & Consumer discretionary\\
  Berkshire Hathaway            & BRKB   & 0.476                 & Financials            \\
  Chevron                       & CVX    & 0.232                 & Energy                \\
  Alphabet                      & GOOGL  & 0.502                 & Communication services\\
  Honeywell International       & HON    & 0.458                 & Industrials           \\
  Johnson \& Johnson            & JNJ    & 0.598                 & Health care           \\
  NextEra Energy                & NEE    & 0.500                 & Utilities             \\
  Procter \& Gamble             & PG     & 0.502                 & Consumer staples      \\
  The Sherwin--Williams Company & SHW    & 0.412                 & Materials             \\
  \Xhline{3\arrayrulewidth}   
\end{tabularx}
\caption*{\small{%
Basket of eleven S\&P 500 stocks used for portfolio optimization, with normalized ESG ratings (January 2025) and sector classifications based on~\cite{sp2025}.%
}}
\label{tab:pf_optimization}
\end{table}

Let $X_i$ denote the one-month financial exposure of a \$1 investment in stock $i$, and let $S_i^\norma$ be its normalized ESG rating one month ahead, defined as in \cref{eq:financial_exposure} and  \cref{eq:ESG_exposure}. Recall that $X_i$ and $S_i^\norma$ are driven by the monthly stock returns $R_X^i$ and the monthly $\log$-changes of rescaled ESG ratings $R_S^i$
\[
  R_X^i \defined \mu_X^i\Delta + \sqrt{\Delta}Z_1^i
  \qquad\text{and}\qquad
  R_S^i \defined J_S^i\Bigl(\mu_S^i\Delta + \sqrt{\Delta}Z_2^i\Bigr),
  \qquad i=1,\dots,11,
\]
where $\mu_X^i,\mu_S^i\in\R$ are annualized means.
In the multi-stock setting, we allow dependence across firms in both the Bernoulli variables $J_S^1,\dots,J_S^{11}$ and the Gaussian shocks $Z_1^1,Z_2^1,\dots,Z_1^{11},Z_2^{11}$.
For the Bernoulli random variables this means that we must specify their pairwise correlations, whereas in the case of the normal random variables we assume that the vector
$(Z_1^1,Z_2^1,\dots,Z_1^{11},Z_2^{11})^\top$ is multivariate normal.
This approach allows us to capture correlations between jump times in the ESG rating, jumps of the ESG ratings, and stock price returns also across different companies.

The investor chooses portfolio weights $w = (w_1,\dots,w_{11})^\top\in\R^{11}$ from the set
\[
  \Xi \defined \Bigl\{w : \sum_{i=1}^{11} w_i = 1\text{ and }0 \leq w_i\leq 0.2\text{ for }i=1,\dots,11\Bigr\},
\]
thus excluding short sales and enforcing diversification.
Given $w\in\Xi$, the portfolio's financial exposure and ESG rating are
\[
  X^w \defined \sum_{i=1}^{11}w_iX_i
  \qquad\text{and}\qquad
  S^w \defined \sum_{i=1}^{11}w_iS^\norma_i.
\]
The objective is to find minimum risk portfolios by solving
\[
  \min_{w\in\Xi} \RiskSF[X^w]
  \qquad\text{and}\qquad
  \min_{w\in\Xi} \RiskESG[X^w,S^w],
\]
where $\RiskSF$ and $\RiskESG$ denote the classical entropic and uncapped entropic ESG risk measures, respectively, with parameters as in \cref{tab:parameters}.

The optimization of portfolios is conducted using a 20-month rolling window. The first minimum risk portfolio is computed in June 2023 using data from October 2021 to June 2023, and the optimization is updated monthly using the preceding 20 months of observations. The final portfolio is obtained in January 2025.

\begin{figure}[hbt!]
\centering
\includegraphics[trim= 0 0px 0 0px, clip, width=\textwidth]{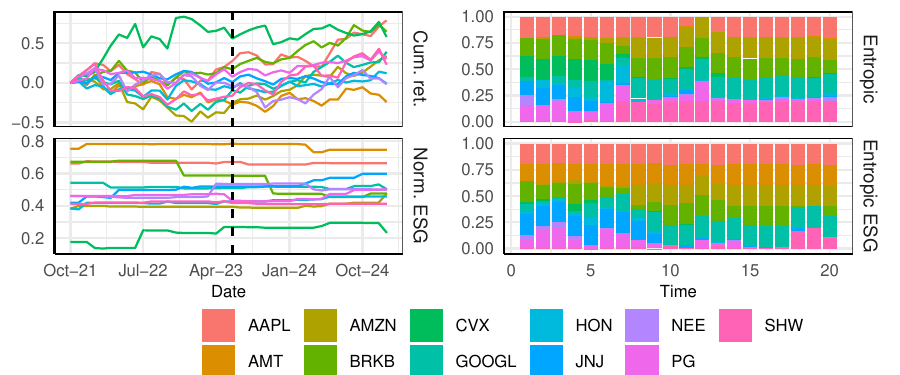}
\captionsetup{labelfont=bf}
\caption{\textbf{%
Cumulative returns and normalized ESG ratings for the basket of S\&P~500 stocks, and optimal portfolio weights of the minimum risk strategies%
}\\
\small{%
Cumulative returns and normalized ESG ratings (left) and optimal portfolio weights (right) for the eleven-stock basket. Portfolio optimization starts in June 2023 (dashed line); weights are shown for classical entropic risk (top) and entropic ESG risk (bottom).
}}
\label{fig:optimal_portfolio_weights_stock_prices_and_ESG_ratings}
\end{figure}

\begin{figure}[hbt!]
\centering

\begin{subfigure}[t]{0.4\textwidth}
    \centering
    \includegraphics[width=\textwidth]{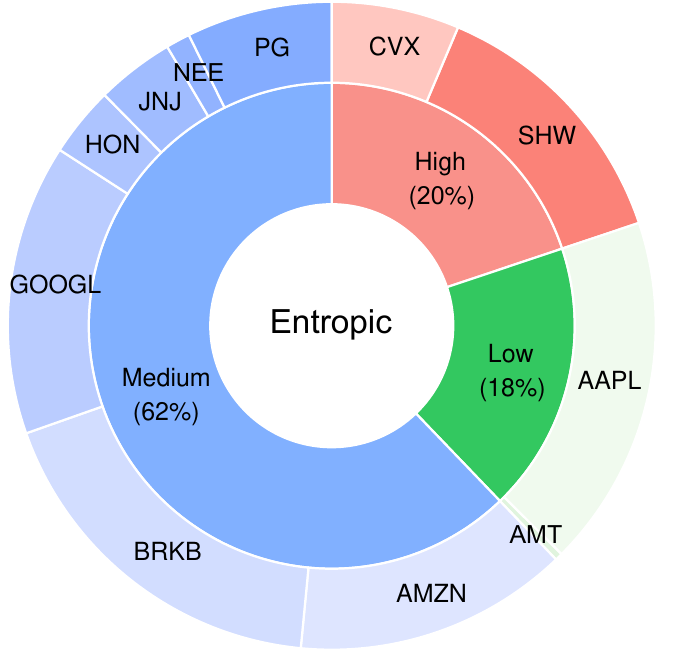}
\end{subfigure}
\hspace{0.1\textwidth}
\begin{subfigure}[t]{0.4\textwidth}
    \centering
    \includegraphics[width=\textwidth]{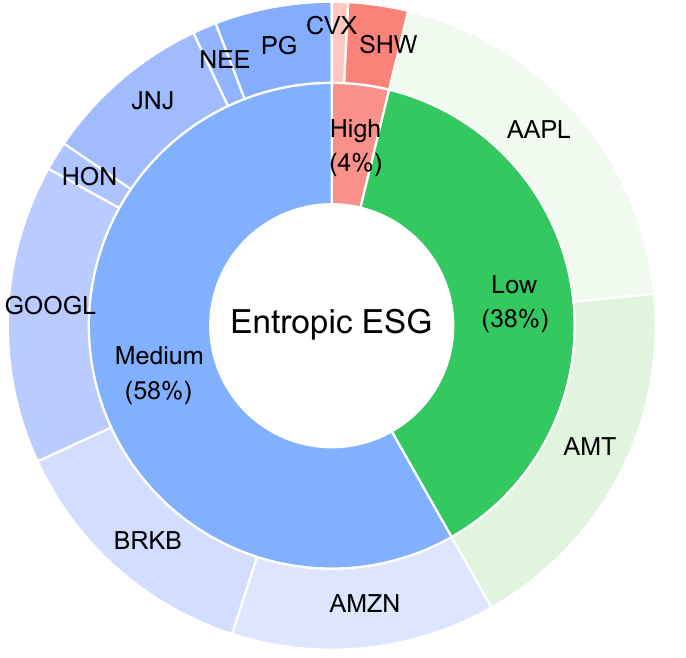}
\end{subfigure}

\captionsetup{labelfont=bf}
\caption{\textbf{Averaged optimal portfolio weights grouped by risk categories}\\
\small{%
Average optimal portfolio weights by ESG risk category under the classical entropic risk measure (left) and the entropic ESG risk measure (right), averaged over the 20-month optimization period.%
}}
\label{fig:weights}
\end{figure}

\begin{table}[ht]
\captionsetup{position=above, singlelinecheck=false, labelfont=bf}
\caption{\textbf{Averaged optimal portfolio weights}}
\centering
\begin{tabularx}{\textwidth}{
    l>{\centering\arraybackslash}p{0.9cm}
    >{\centering\arraybackslash}p{1cm}
    >{\centering\arraybackslash}p{1.1cm}
    >{\centering\arraybackslash}p{1.1cm}
    >{\centering\arraybackslash}p{0.8cm}
    >{\centering\arraybackslash}p{1.3cm}
    >{\centering\arraybackslash}p{0.9cm}
    >{\centering\arraybackslash}p{0.9cm}
    >{\centering\arraybackslash}p{1cm}
    >{\centering\arraybackslash}p{0.9cm}
    >{\centering\arraybackslash}p{1cm}
}
        & AAPL   & AMT    & AMZN  & BRKB    & CVX   & GOOGL  & HON   & JNJ   & NEE   & PG    & SHW   \\
    \hline
    (a) & 17.6\% &  0.4\% & 13.7\% & 18.1\% & 6.4\% & 14.5\% & 3.5\% & 3.9\% & 1.2\% & 7.3\% & 13.5\%\\ 
    (b) & 19.6\% & 18.4\% & 13.2\% & 13.1\% & 0.8\% & 14.9\% & 1.5\% & 8.5\% & 1.2\% & 5.8\% &  3.0\%\\
    \Xhline{3\arrayrulewidth}   
\end{tabularx}
\caption*{\small{%
Average optimal portfolio weights over the 20-month optimization period. Row (a) corresponds to the classical entropic risk measure; row (b) to the entropic ESG risk measure.
}}
\label{tab:weights}
\end{table}

\cref{fig:optimal_portfolio_weights_stock_prices_and_ESG_ratings} shows the resulting optimal portfolio weights, together with cumulative returns and normalized ESG ratings of the constituent stocks.
\cref{tab:weights} reports average portfolio weights over the 20 optimization periods, while Figure~\ref{fig:weights} aggregates weights by ESG risk category.
We observe that both the classical entropic and entropic ESG strategies consistently invest in AAPL, AMZN, BRKB, and GOOGL. This behavior can be explained by the strong financial performance paired with moderate ESG ratings of these companies. 
In contrast, allocations differ significantly for AMT, CVX, JNJ, and SHW. The entropic ESG strategy favors AMT and JNJ, which are among the strongest ESG performers, while avoiding CVX and SHW despite their strong returns due to weak ESG ratings.
Incorporating ESG risk substantially improves the portfolio’s ESG composition. Under the classical entropic strategy, 20\% of the portfolio falls into the ``high'' ESG risk category and 18\% into ``low'', whereas the entropic ESG strategy reduces the high-risk share to 4\% and increases the low-risk share to 38\%.

\begin{figure}[hbt!]
\centering
\includegraphics[trim= 0 0px 0 0px, clip, width=\textwidth ]{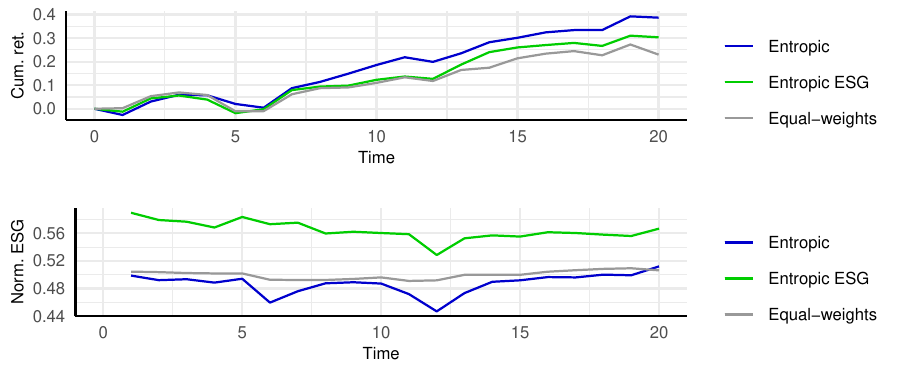}
\captionsetup{labelfont=bf}
\caption{\textbf{%
Cumulative $\boldsymbol\log$-returns and normalized ESG ratings of the minimum risk portfolios and an equal-weights portfolio}\\
\small{%
Evolution of the cumulative log returns (top) and normalized ESG ratings (bottom) of the
three strategies over the 20-month optimization period.%
}
}
\label{fig:financial_performances}
\end{figure}

To compare the minimum risk portfolios’ performance, we present in \cref{fig:financial_performances} the cumulative returns and normalized ESG ratings of the two portfolios and compare them with an equal-weights benchmark. Financial performance is similar across the two optimized strategies, with the classical entropic portfolio slightly outperforming the entropic ESG portfolio.
By the end of the investment period, cumulative log returns are 0.3869 (+47\%) for the classical entropic strategy and 0.3034 (+35\%) for the entropic ESG strategy, compared to 0.2299 (+26\%) for the equal-weights benchmark.

In contrast, ESG performance differs significantly. The entropic ESG strategy achieves a substantially higher normalized ESG rating, exceeding the classical entropic portfolio by 0.0771 and the equal-weights benchmark by 0.0643 on average over the investment period. The classical entropic strategy exhibits the weakest ESG performance, underperforming the equal-weights portfolio by 0.0128. 
We conclude that, within the scope of the minimum risk portfolio setting discussed here, incorporating ESG criteria into portfolio optimization leads to a modest reduction in financial performance but a significant improvement in the overall ESG rating of the portfolio. As such, it constitutes a viable approach for ESG-conscious investors.

\section{Conclusion} \label{sec:conclusion}

This paper introduces a new class of risk measures for ESG assets capable of combining financial risk and risk arising from an asset's ESG rating.
The construction is based on ideas from multi-attribute utility theory, allowing us to build upon axiomatic utility theory to develop ESG risk measures with desirable properties such as translation invariance, monotonicity, and convexity.
In an extensive empirical study, we apply ESG risk measures to an S\&P 500 dataset.
Our results show that ESG risk measures are capable of differentiating assets with similar financial performance based on their ESG assessment.
Moreover, we show how ESG risk measures can be used to improve the ESG performance of an investment portfolio without sacrificing financial performance.

\bibstyle{acm}
\bibliography{bibliography}

\end{document}